\documentclass[11pt, letter]{article}
\usepackage{cite,amsmath,amsfonts,subfigure,color,graphicx,url, amsthm}
\urlstyle{rm}

\newtheorem{theorem}{Theorem}
\newtheorem{lemma}[theorem]{Lemma}
\newtheorem{corollary}[theorem]{Corollary}
\newtheorem{fact}[theorem]{Fact}
\newtheorem{observation}[theorem]{Observation}

\topmargin 0pt
\advance \topmargin by -\headheight
\advance \topmargin by -\headsep
\textheight 9in
\oddsidemargin 2pt
\evensidemargin \oddsidemargin
\marginparwidth 0.5in
\textwidth 6.5in


\let\epsilon=\varepsilon
\newcommand{\R}{\mathbb{R}}
\newcommand{\mIn}{\mathop{\rm m\i n}}
\newcommand{\median}{\mathop{\rm median}}
\newcommand{\mean}{\mathop{\rm mean}}
\newbox\Dbox
\setbox\Dbox=\hbox{\scriptsize d}
\ht\Dbox=0pt
\newcommand{\meD}{{\mathop{\rm me\copy\Dbox}}}
\newcommand{\convolve}{{\mathop{*}\nolimits}}

\newcommand{\generalconvolve}{{\mathop{*}\limits_\oplus^\odot}}
\newcommand{\minconvolve}{{\mathop{*}\limits_{\mIn}^-}}
\newcommand{\maxconvolve}{{\mathop{*}\limits_{\max}^-}}
\newcommand{\medianconvolve}{{\mathop{*}\limits_{\meD}^-}}
\newcommand{\minmultiply}{{\mathop{\cdot}\limits_{\mIn}^-}}

\newcounter{last}

\makeatletter
\let\@fnsymbol=\@arabic
\makeatother

\begin{document}

\title{Necklaces, Convolutions, and $X+Y$}
\author{%
\begin{tabular}{c@{\quad\qquad}c@{\quad\qquad}c}
  David Bremner%
    \thanks{Faculty of Computer Science, University of New Brunswick,
      Fredericton, New Brunswick, Canada, \protect\url{bremner@unb.ca}.
      Supported by NSERC.}
&
  Timothy M. Chan%
    \thanks{School of Computer Science, University of Waterloo,
      Waterloo, Ontario, Canada, \protect\url{tmchan@uwaterloo.ca}.
      Supported by NSERC.}
&
  Erik D. Demaine%
    \thanks{Computer Science and Artificial Intelligence Laboratory,
      Massachusetts Institute of Technology, Cambridge, MA, USA,
      \protect\url{edemaine@mit.edu}.
      Supported in part by NSF grants CCF-0430849 and OISE-0334653 and
      by an Alfred P. Sloan Fellowship.}
\medskip\\
  Jeff Erickson%
    \thanks{Computer Science Department, University of Illinois,
      Urbana-Champaign, IL, USA, \protect\url{jeffe@cs.uiuc.edu}.}
&
  Ferran Hurtado%
    \thanks{Departament de Matem\`atica Aplicada II,
      Universitat Polit\`ecnica de Catalunya, Barcelona, Spain,
      \protect\url{Ferran.Hurtado@upc.edu}.
      Supported in part by  projects MICINN MTM2009-07242, Gen. Cat. DGR
      2009SGR1040, and ESF EUROCORES programme EuroGIGA,
      CRP ComPoSe: MICINN Project EUI-EURC-2011-4306, for Spain.}
&
  John Iacono%
    \thanks{Department of Computer and Information Science,
      Polytechnic University, Brooklyn, NY, USA,
      \protect\url{http://john.poly.edu}.
      Supported in part by NSF grants CCF-0430849 and OISE-0334653 and
      by an Alfred P. Sloan Fellowship.}
\medskip\\
  Stefan Langerman%
    \thanks{Directeur de Recherches du FRS--FNRS, D\'epartment d'Informatique,
      Universit\'e Libre de Bruxelles, Brussels, Belgium,
      \protect\url{stefan.langerman@ulb.ac.be}.}
&
 Mihai P\v{a}tra\c{s}cu%
    \thanks{Chercheur qualifi\'e du FNRS, D\'epartment d'Informatique,
      Universit\'e Libre de Bruxelles, Brussels, Belgium,
      \protect\url{stefan.langerman@ulb.ac.be}.}
&
 Perouz Taslakian%
    \thanks{College of Science and Engineering, American University of Armenia, Yerevan, Armenia,
      \protect\url{ptaslakian@aua.am}.}
\end{tabular}
}

\date{}

\maketitle

\vspace*{0.8cm} 
\begin{quote} 
\centering 
In memory of our colleague Mihai P\v{a}tra\c{s}cu. 
\end{quote}
\vspace*{0.8cm}

\begin{abstract}
  We give subquadratic algorithms that, given two necklaces each with $n$
  beads at arbitrary positions, compute the optimal rotation of the
  necklaces to best align the beads.
  Here alignment is measured according to the $\ell_p$ norm of the
  vector of distances between pairs of beads from opposite necklaces
  in the best perfect matching.
  We show surprisingly different results for $p=1$, 
  $p$ even, and $p=\infty$.
  For 
  $p$ even, we reduce the problem to standard convolution,
  while for
  $p=\infty$ and $p=1$, we reduce the problem to $(\min,+)$ convolution
  and $(\median,+)$ convolution.  Then we solve the latter two convolution
  problems in subquadratic time, which are interesting results in their
  own right.
  These results shed some light on the classic sorting $X+Y$ problem,
  because the convolutions can be viewed as computing order statistics on the
  antidiagonals of the $X+Y$ matrix.
  All of our algorithms run in $o(n^2)$ time, whereas the obvious algorithms
  for these problems run in $\Theta(n^2)$ time.
\end{abstract}

\newpage

\section{Introduction}

How should we rotate two necklaces, each with $n$ beads at different locations,
to best align the beads?  More precisely, each necklace is represented
by a set of $n$ points on the unit-circumference circle,
and the goal is to find rotations of the necklaces,
and a perfect matching between the beads of the two necklaces,
that minimizes some norm of the circular distances between matched beads.
In particular, the $\ell_1$ norm minimizes the average absolute circular
distance between matched beads, the $\ell_2$ norm minimizes the average
squared circular distance between matched beads, and the $\ell_\infty$ norm
minimizes the maximum circular distance between matched beads.
The $\ell_1$ version of this necklace alignment problem was introduced by
Toussaint \cite{Toussaint-2004-JCDCG} in the context of comparing
rhythms in computational music theory, with possible applications
to rhythm phylogeny
\cite{Diaz-2004, Toussaint-2004-ISMIR}.

Toussaint \cite{Toussaint-2004-JCDCG} gave a simple $O(n^2)$-time algorithm
for $\ell_1$ necklace alignment, and highlighted as an interesting open
question whether the problem could be solved in $o(n^2)$ time.
In this paper, we solve this open problem by giving $o(n^2)$-time algorithms
for $\ell_1$, $\ell_2$, and $\ell_\infty$ necklace alignment,
in both the standard real RAM model of computation and the less realistic
nonuniform linear decision tree model of computation.
Our results for the case of the $\ell_1$ and $\ell_\infty$ distance measures in the real RAM model
also answer the questions posed by Clifford et al. in~\cite{clifford-04} 
(see the \emph{shift matching problem} in Problem 5 of the tech report).

\paragraph{Necklace alignment problem.}

\begin{figure}[b]
  \centering
  \includegraphics[scale=0.86]{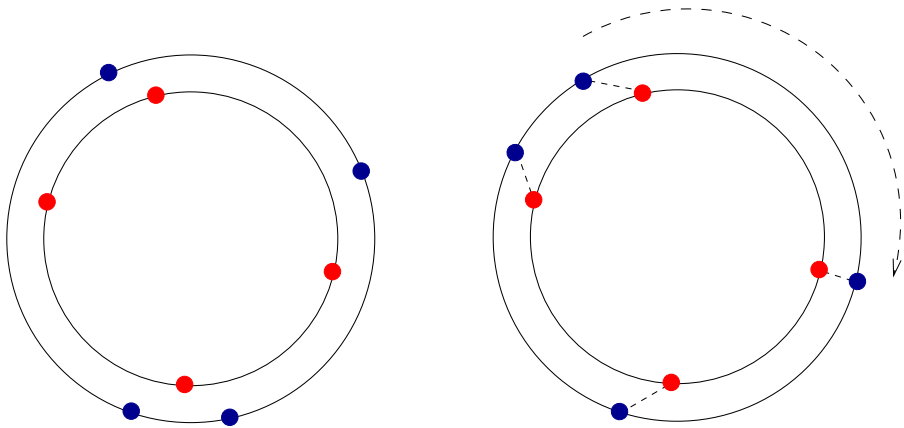}
  \caption{An example of necklace alignment: the input (left) and
    one possible output (right).}
  \label{necklace}
\end{figure}

More formally, in the \emph{necklace alignment problem}, the input is
a number $p$ representing the $\ell_p$ norm, and
two sorted vectors of $n$ real numbers,
$\vec x = \langle x_0, x_1, \dots, x_{n-1} \rangle$ and
$\vec y = \langle y_0, y_1, \dots, y_{n-1} \rangle$,
representing the two necklaces.
See Figure \ref{necklace}.
Canonically, we assume that each number $x_i$ and $y_i$ is in the range
$[0,1)$, representing a point on the unit-circumference circle
(parameterized clockwise from some fixed point).
The distance between two beads $x_i$ and $y_j$ is the minimum between the clockwise
and counterclockwise distances along the circumference of the unit-perimeter circular necklaces.
We define this distance as: 
$$d^\circ(x_i,y_j) = \min \{\left| x_i - y_j\right|, \displaystyle(1 - \left| x_i - y_j\right|\displaystyle)\}.$$ 

The optimization problem involves two parameters.
The first parameter, the \emph{offset} $c \in [0, 1)$, is the
clockwise rotation angle of the first necklace relative to the second necklace.
The second parameter, the \emph{shift} $s \in \{0, 1, \dots, n\}$,
defines the perfect matching between beads: bead $i$ of the first necklace
matches with bead $(i + s) \bmod n$ of the second necklace.
(Here we use the property that an optimal perfect matching between the beads
does not cross itself.)

The goal of the $\ell_p$ necklace alignment problem is to find
the offset $c \in [0, 1)$ and the shift $s \in \{0, 1, \dots, n\}$
that minimize
\begin{equation} 
\label{circ}
\sum_{i=0}^{n-1} \left(d^\circ((x_i+c) \bmod 1,y_{(i+s)\bmod n})\right)^p
\end{equation}
or, in the case $p=\infty$, that minimize
$$ \max_{i=0}^{n-1} \{d^\circ((x_i+c) \bmod 1,y_{(i+s)\bmod n})\}.$$ 

The $\ell_1$, $\ell_2$, and
$\ell_\infty$ necklace alignment problems all have trivial $O(n^2)$ solutions, 
although this might not be obvious from the definition.
In each case, as we show, the optimal offset $c$ can be computed
in linear time for a given shift value~$s$ (sometimes even independent of~$s$).
The optimization problem is thus effectively over just
$s \in \{0, 1, \dots, n\}$,
and the objective costs $O(n)$ time to compute for each~$s$,
giving an $O(n^2)$-time algorithm.

\paragraph{}
Although necklaces are studied throughout mathematics,
mainly in combinatorial settings, we are not aware of
any work on the necklace alignment problem before
Toussaint \cite{Toussaint-2004-JCDCG}.
He introduced $\ell_1$ necklace alignment, calling it the
\emph{cyclic swap-distance} or \emph{necklace swap-distance} problem,
with a restriction that the beads lie at integer coordinates.
Ardila et al.~\cite{ardila-2008} give a $O(k^2)$-time algorithm for computing the necklace swap-distance
between two binary strings, with $k$ being the number of 1-bits (beads at integer coordinates).
Colannino et
al.~\cite{Colannino-Damian-Hurtado-Iacono-Meijer-Ramaswami-Toussaint-2006}
consider some different distance measures between two sets of points
on the real line in which the matching does not have to match every point.
They do not, however, consider alignment under such distance measures.

Aloupis et al.~\cite{aloupis-2004}, consider the problem of computing the similarity
of two melodies represented as closed orthogonal chains on a cylinder.
Their goal is to find the proper (rigid) translation of one of the chains in the vertical (representing pitch)
and tangential (representing time) direction so that the area between the chains is minimized. 
The authors present an $O(m n \lg(n+m))$ algorithm that solves the problem.
When the melodic chains each have a note at every time unit, the  melodic similarity problem
is equivalent to the necklace alignment problem, and as our results are subquadratic,
we improve on the results of Aloupis et al.~\cite{aloupis-2004}
for this special case.

\paragraph{Convolution.}
Our approach in solving the necklace alignment problem 
is based on reducing it 
to another important problem, convolution, for which we also obtain improved algorithms.
The $(+,\cdot)$ convolution
of two vectors $\vec x = \langle x_0, x_1, \allowbreak\dots, x_{n-1} \rangle$ and
$\vec y = \langle y_0, y_1, \dots, y_{n-1} \rangle$, is the vector
$\vec x \convolve \vec y = \langle z_0, z_1, \dots, z_{n-1} \rangle$
where $z_k = \sum_{i=0}^k x_i \cdot y_{k-i}$.
One can generalize convolution to any $(\oplus,\odot)$ operators.
Algorithmically, a convolution with specified addition and
multiplication operators (here denoted $\vec x \generalconvolve \vec y$)
can be easily computed in $O(n^2)$ time.
However, 
the $(+,\cdot)$ convolution can be computed in $O(n \lg n)$
time using the Fast Fourier Transform
\cite{Cooley-Tukey-1965, Gauss-1866-III, Heideman-Johnson-Burrus-1985},
because the Fourier transform converts convolution into elementwise
multiplication.
Indeed, fast $(+,\cdot)$ convolution was one of the early breakthroughs
in algorithms, with applications to
polynomial and integer multiplication \cite{Bernstein-survey},
batch polynomial evaluation
\cite[Problem~30-5]{Cormen-Leiserson-Rivest-Stein-2001},
3SUM \cite{Erickson-1999-satisfiability, Baran-Demaine-Patrascu-2008},
string matching
\cite{cc-07, Fischer-Paterson-1973, Indyk-1998-string, Kalai-2002-dontcare,
Cole-Hariharan-2002},
matrix multiplication \cite{Cohn-Kleinberg-Szegedy-Umans-2005},
and even juggling \cite{Cardinal-Kremer-Langerman-2006}.

In this paper we use three types of convolutions: 
$(\min,+)$ convolution, whose $k$th entry $z_k = \min_{i=0}^k \left\{ x_i + y_{k-i} \right\}$; 
$(\median,+)$ convolution, whose $k$th entry $z_k = \median_{i=0}^k \left\{ x_i + y_{k-i} \right\}$;
and $(+,.)$ convolution, whose $k$th entry $z_k = \sum_{i=0}^k \left\{ x_i . y_{k-i} \right\}$.
As we show in
Theorems \ref{l_2}, \ref{l_infty reduction}, and \ref{l_1 reduction},
respectively,
$\ell_2$ necklace alignment reduces to standard $(+,\cdot)$ convolution,
$\ell_\infty$ necklace alignment reduces to $(\min,+)$ [and $(\max,+)$]
convolution, and
$\ell_1$ necklace alignment reduces to $(\median,+)$ convolution.
The $(\min,+)$ convolution problem has appeared frequently in the literature,
already appearing in Bellman's
early work on dynamic programming in the early 1960s
\cite{Bellman-Karush-1962, Felzenszwalb-Huttenlocher-2004,
Maragos-2000, Moreau-1970, Rockafellar-1970, Stroemberg-1996}.
Its name varies among ``minimum convolution'', ``min-sum convolution'',
``inf-convolution'', ``infimal convolution'', and ``epigraphical sum''.%
\footnote{``Tropical convolution'' would also make sense,
  by direct analogy with tropical geometry, but we have never
  seen this terminology used in print.}
To date, however, no worst-case $o(n^2)$-time algorithms for this convolution,
or the more complex $(\median,+)$ convolution, has been obtained
(it should be noted here that the quadratic worst-case running time 
for $(\median,+)$ convolution follows from linear-time median finding~\cite{Blum-1973, Schonhage-1976}).
In this paper, we develop worst-case $o(n^2)$-time algorithms for
$(\min,+)$ and $(\median,+)$ convolution,
in the real RAM and the nonuniform linear decision tree
models of computation.

The only subquadratic results for $(\min,+)$ convolution
concern two special cases.
First, the $(\min,+)$ convolution of two convex sequences or functions
can be trivially computed in $O(n)$ time by a simple merge,
which is the same as computing the Minkowski sum of two convex polygons
\cite{Rockafellar-1970}.
This special case is already used in image processing and computer vision
\cite{Felzenszwalb-Huttenlocher-2004, Maragos-2000}.
Second, Bussieck et al.~\cite{Bussieck-Hassler-Woeginger-Zimmermann-1994}
proved that the $(\min,+)$ convolution of two \emph{randomly permuted}
sequences can be computed in $O(n \lg n)$ expected time.
Our results are the first to improve the worst-case running time
for $(\min,+)$ convolution.

\paragraph{Connections to $X+Y$.}

The necklace alignment problems, and their corresponding convolution problems,
are also intrinsically connected to problems on $X+Y$ matrices.
Given two lists of $n$ numbers,
$X = \langle x_0, x_1, \dots, x_{n-1} \rangle$ and
$Y = \langle y_0, y_1, \dots, y_{n-1} \rangle$,
$X+Y$ is the matrix of all pairwise sums, whose $(i,j)$th entry is
$x_i + y_j$.
A classic unsolved problem \cite{TOPP-41}
is whether the entries of $X+Y$ can be sorted
in $o(n^2 \lg n)$ time.
Fredman \cite{Fredman-1976-XpY} showed that $O(n^2)$ comparisons suffice
in the nonuniform linear decision tree model, but it remains open whether
this can be converted into an $O(n^2)$-time algorithm in the real RAM model.
Steiger and Streinu \cite{Steiger-Streinu-1995} gave a simple algorithm
that takes $O(n^2 \lg n)$ time while using only $O(n^2)$ comparisons.

The $(\min,+)$ convolution is equivalent to finding the minimum
element in each antidiagonal of the $X+Y$ matrix, and similarly
the $(\max,+)$ convolution finds the maximum element in each antidiagonal.
We show that $\ell_\infty$ necklace alignment is equivalent to finding the
antidiagonal of $X+Y$ with the smallest \emph{range}
(the maximum element minus the minimum element).
The $(\median,+)$ convolution is equivalent to finding the median element
in each antidiagonal of the $X+Y$ matrix.
We show that $\ell_1$ necklace alignment is equivalent to finding the
antidiagonal of $X+Y$ with the smallest \emph{median cost}
(the total distance between each element and the median of the elements).
Given the apparent difficulty in sorting $X+Y$,
it seems natural to believe that the minimum, maximum, and median elements
of every antidiagonal cannot be found, and that the corresponding objectives
cannot be minimized, any faster than $O(n^2)$ total time.
Figure~\ref{xplusy} shows a sample $X+Y$ matrix with the maximum element
in each antidiagonal marked, with no apparent structure.
Nonetheless, we show that $o(n^2)$ algorithms are possible.

\begin{figure}[ht]
  \centering
  \includegraphics[scale=0.6]{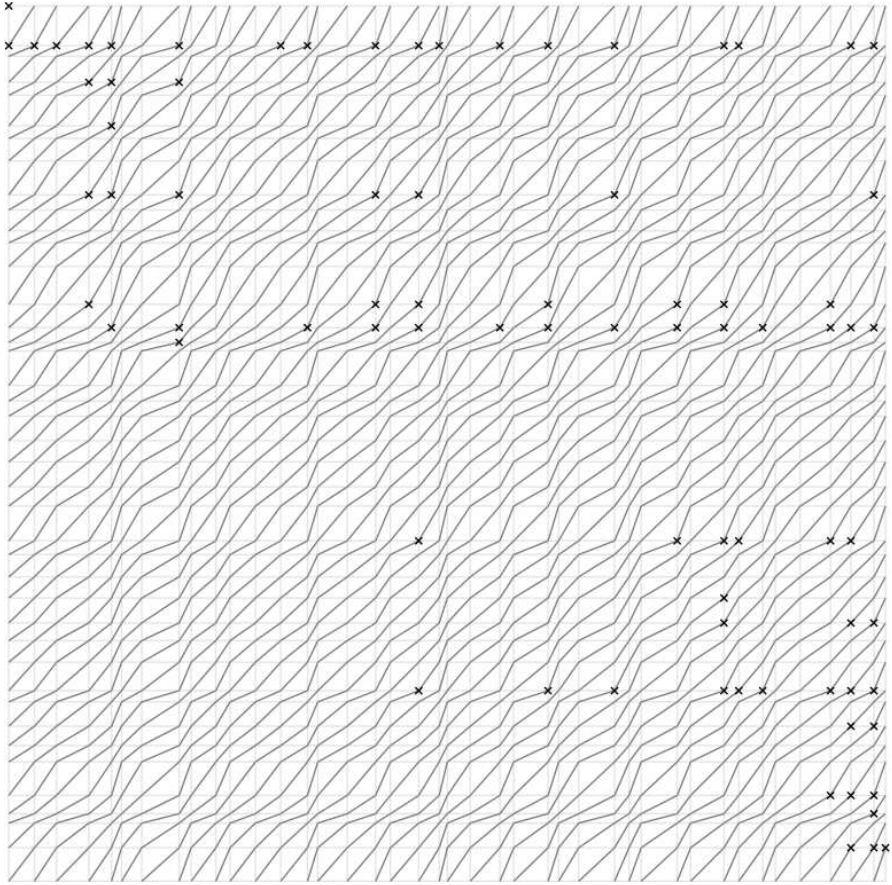}
  \caption{An $X+Y$ matrix.  Each polygonal line denotes an antidiagonal
           of the matrix, with a point at coordinates $(x,y)$ denoting
           the value $x+y$ for $x \in X$ and $y \in Y$.
           An~$\times$ denotes the maximum element in each antidiagonal.}
  \label{xplusy}
\end{figure}

\paragraph{Our results.}
In the standard real RAM model, we give subquadratic algorithms
for the $\ell_1$, $\ell_2$, and $\ell_\infty$ necklace alignment problems,
and for the $(\min,+)$ and $(\median,+)$ convolution problems.
We present:
\begin{enumerate}
\item an $O(n \lg n)$-time algorithm on the real RAM
      for $\ell_2$ necklace alignment (Section \ref{sec:l_2}).
\item an $O(n^2/\lg n)$-time algorithm on the real RAM
      for $\ell_\infty$ necklace alignment and $(\min,+)$ convolution
      (Section \ref{sec:l_infty}).
      This algorithm uses a technique of Chan originally developed for the
      all-pairs shortest paths problem~\cite{Chan-2005-apsp}.
      Despite the roughly logarithmic factor improvements for
      $\ell_1$ and~$\ell_\infty$, this result does not use word-level bit tricks
      of word-RAM fame.
\item a further improved $O(n^2 (\lg \lg n)^3 / \lg^2 n)$-time algorithm 
      for $\ell_\infty$ necklace alignment and $(\min,+)$ convolution
      (Section \ref{sec:l_infty}).
      We actually give a direct black-box reduction of $(\min,+)$ convolution
      to all-pairs shortest paths; the result then follows from the
      current best upper bound for all-pairs shortest paths~\cite{Chan-2010-apsp}.
      The all-pairs shortest paths works in the real RAM with respect to the inputs, i.e. it does not use bit tricks on the inputs. 
      The algorithm, however, requires bit tricks on other numbers, but works in a standard model
      that assumes $(\lg n)$-bit words.
\item an $O(n^2 (\lg \lg n)^2/\lg n)$-time algorithm on the real RAM
      for $\ell_1$ necklace alignment and $(\median,+)$ convolution
      (Section \ref{sec:l_1}).
      This algorithm uses an extension of the technique of Chan~\cite{Chan-2005-apsp}.
\setcounter{last}{\theenumi}
\end{enumerate}
In the nonuniform linear decision tree model, we give particularly fast
algorithms for the $\ell_1$ and $\ell_\infty$ necklace alignment problems,
using techniques of Fredman \cite{Fredman-1976-XpY, Fredman-1976-APSP}:
\begin{enumerate}
\setcounter{enumi}{\thelast}
\item $O(n \sqrt n)$-time algorithm in the nonuniform linear decision
      tree model for $\ell_\infty$ necklace alignment and
      $(\min,+)$ convolution (Section \ref{sec:l_infty}).
\item $O(n \sqrt {n \lg n})$-time algorithm in the nonuniform linear
      decision tree model for $\ell_1$ necklace alignment and
      $(\median,+)$ convolution (Section \ref{sec:l_1}).
\end{enumerate}
(Although we state our results here in terms of $(\min,+)$ and
$(\median,+)$ convolution, the results below use $-$ instead of $+$
because of the synergy with necklace alignment.)
We also mention connections to the venerable $X+Y$ and 3SUM problems
in Section~\ref{Conclusion}.

\section{Linear Versus Circular Alignment}
Before we proceed with proving our results, we first show that 
any optimal solution to the necklace alignment problem can be transformed into an optimal solution to 
the problem of linear alignment---aligning and matching beads that are on a line. 
We then can use the  simpler optimization function of the ``linear alignment problem'' to show our results.
Let $d^- (x_i,y_j)= |x_i - y_j|$ be the linear distance between two beads $x_i$ and $y_j$.
In the \emph{linear alignment problem} we 
are given two sorted vectors of real numbers 
$\vec x = \langle x_0, x_1, \dots, x_{n-1} \rangle$ and
$\vec y = \langle y_0, y_1, \dots, y_{m-1} \rangle$
with $m \geq n$,
and we want to find $s$ ($s<m-n$) and $c$ that minimize 
\begin{equation} 
\label{lin}
\sum_{i=0}^{n-1} \left(d^-(x_i+c,y_{i+s})\right)^p
\end{equation}
or, in the case $p=\infty$, that minimize
$$ \max_{i=0}^{n-1} \{d^-(x_i+c,y_{i+s})\}.$$ 

The main difference between \eqref{circ} and \eqref{lin} is that instead of taking
the minimum between the clockwise and counterclockwise distances between pairs of matched beads in \eqref{circ},
we are simply summing the forward distances between beads in \eqref{lin}. 
We will now show that 
whether the beads are on a line (repeating $\vec y$ infinitely many times on the line) or a circle, the optimal alignment of the beads $\vec x$ and $\vec y$ 
in these two cases are equal.

Let $M^{\circ}$ and $M^-$ be an alignment/matching of the beads of $\vec x$ and $\vec y$ along the unit circumference circle $C$ and 
the infinite line segment $L$ respectively.
An \emph{edge} $(x_i,y_j)$ of $M^{\circ}$ ($M^-$)
is the shortest segment that connects two matched beads $x_i$ and $y_j$ in $M^{\circ}$ ($M^-$);
thus, the length of $(x_i,y_j)$ is equal to $d^\circ(x_i,y_j)$ ($d^-(x_i,y_j)$).
We will show that the sum of the lengths of the edges of each of the optimal matchings $M^{\circ*}$ and $M^{-*}$ are equal.
Note that by the quadrangle inequality, we have that
the edges of both $M^{\circ*}$ and $M^{-*}$ are non-crossing.

\begin{observation}
\label{min edge length}
Consider any edge $(x_i,y_j)$ along the circular necklace. 
If this edge crosses point $0$, 
then the distance $$d^\circ(x_i,y_j) = \displaystyle(1-|x_i - y_j|\displaystyle);$$
otherwise, $$d^\circ(x_i,y_j) = |x_i - y_j|.$$
\end{observation}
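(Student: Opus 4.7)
The plan is to unfold the definitions of $d^\circ$ and of the edge $(x_i,y_j)$ on the unit-circumference circle. Assume without loss of generality that $x_i \le y_j$; by symmetry of both $d^\circ$ and the notion of edge, this loses no generality. Since $x_i, y_j \in [0,1)$, there are precisely two arcs on the circle joining $x_i$ and $y_j$: the ``forward'' arc that lies in the interval $[x_i, y_j]$ and has length $y_j - x_i = |x_i - y_j|$, and the complementary arc that passes through the point $0$ and has length $1 - (y_j - x_i) = 1 - |x_i - y_j|$.

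Next, I would invoke the definition of the edge $(x_i,y_j)$ as the shortest segment on the circle joining the two beads. Its length therefore equals $\min\{|x_i-y_j|,\,1-|x_i-y_j|\} = d^\circ(x_i,y_j)$, and it coincides with exactly one of the two arcs identified above. A case split then finishes the argument: if the edge does not cross $0$, it must be the forward arc, whose length is $|x_i-y_j|$, giving $d^\circ(x_i,y_j)=|x_i-y_j|$; if instead the edge crosses $0$, it must be the complementary arc, of length $1-|x_i-y_j|$, giving $d^\circ(x_i,y_j)=1-|x_i-y_j|$. Either case yields the stated formula.

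The only mildly delicate point is the tie $|x_i-y_j|=1/2$, in which both arcs have length $1/2$; then either choice of representative edge is consistent with both formulas, so the statement remains correct regardless of how the tie is broken. I do not expect a real obstacle here: once one formalizes ``edge $=$ shortest arc on the circle'', the observation is essentially a restatement of the definition of $d^\circ$ together with the identification of which arc achieves the minimum.
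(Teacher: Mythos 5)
The paper states Observation~\ref{min edge length} without proof, treating it as immediate from the definitions of $d^\circ$ and of an edge as the shorter arc. Your argument is a correct and complete elaboration of exactly that reasoning, including the harmless tie at $|x_i-y_j|=1/2$, so it matches the paper's (implicit) approach.
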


Let 
$\vec{yy}$ be the doubling of the vector~$\vec y$ such that
\begin{eqnarray*}
\vec{yy} &=& \langle y_0, \dots, y_{m-1}, y_0, \dots, y_{m-1} \rangle
= \langle yy_0, \dots, yy_{m-1}, yy_m, \dots, yy_{2m-1} \rangle.\\
\end{eqnarray*}

\begin{theorem}
If $M^{\circ *}$ is the optimal matching of two given vectors $\vec x$ and $\vec y$ (both of length $n$) along the unit-circumference circle $C$
and $M^{-*}$ is the optimal matching of $\vec{x}$ and $\vec{yy}$ along line~$L$, then $|M^{\circ*}| = |M^{-*}|$.
\end{theorem}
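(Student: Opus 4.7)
I would establish $|M^{\circ*}| = |M^{-*}|$ by proving both inequalities separately. The direction $|M^{\circ*}| \le |M^{-*}|$ is the easier one: given an optimal linear matching with parameters $(c^-, s^-)$, I would project everything modulo~$1$ onto the circle. The induced pairing $(x_i + c^-) \bmod 1 \leftrightarrow y_{(i+s^-) \bmod n}$ is a valid circular matching because the indices $i + s^-$ range over $n$ consecutive integers, whose residues modulo $n$ form a permutation of $\{0, \ldots, n-1\}$. The elementary inequality $d^\circ(a \bmod 1,\, b \bmod 1) \le |a - b|$ for all reals $a, b$ (since $d^\circ$ equals $\min_{k \in \mathbb{Z}} |a - b - k|$) shows that each circular edge length is at most the corresponding linear edge length, so summing over the $n$ edges yields the claim.

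For the reverse direction $|M^{-*}| \le |M^{\circ*}|$, my plan is to convert the optimal circular matching into a linear matching of the same total edge-length. Let $(c^\circ, s^\circ)$ be optimal circular parameters and view $\vec{yy}$ as the first two blocks of the bi-infinite periodic sequence $\tilde y_j = y_{j \bmod n} + \lfloor j/n \rfloor$. For each $i$, Observation~\ref{min edge length} implies that the circular distance between $(x_i + c^\circ) \bmod 1$ and $y_{(i+s^\circ) \bmod n}$ equals $|x_i + c^\circ - y_{(i+s^\circ) \bmod n} - k_i|$ for some integer $k_i$; choosing this minimizing $k_i$ and setting $j_i = (i + s^\circ) \bmod n + n k_i$ yields a matching $x_i + c^\circ \leftrightarrow \tilde y_{j_i}$ whose total linear edge-length equals $|M^{\circ*}|$ exactly.

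The main obstacle is showing that the $j_i$ form a consecutive block, so that this matching is realizable as an alignment against the doubled vector $\vec{yy}$. Monotonicity $j_0 < j_1 < \cdots < j_{n-1}$ will follow from the standard fact that nearest-neighbor assignments between two sorted sequences are order-preserving, together with the fact that the residues $j_i \bmod n$ are distinct. To rule out gaps $j_{i+1} - j_i > 1$, I would exploit the optimality of $(c^\circ, s^\circ)$: any such gap corresponds to a configuration whose cost can be strictly decreased by an infinitesimal perturbation of $c^\circ$, contradicting optimality. Once consecutiveness is established, translating $c^\circ$ by an integer (which leaves the circular cost unchanged, but shifts every $j_i$ by $n$) places $j_0$ in the valid range $\{0, 1, \ldots, n\}$, so setting $(c^-, s^-) = (c^\circ + \text{integer}, j_0)$ gives a valid linear alignment of cost exactly $|M^{\circ*}|$. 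Combining the two inequalities yields the theorem.
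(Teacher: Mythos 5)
Your easy direction ($|M^{\circ*}|\le |M^{-*}|$, wrap the line onto the circle and use $d^\circ(a\bmod 1,b\bmod 1)\le |a-b|$) matches the paper's argument. The reverse direction is where your plan diverges, and it has two genuine gaps.

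First, the monotonicity step. You invoke the ``standard fact that nearest-neighbor assignments between two sorted sequences are order-preserving.'' That fact applies when each point is free to choose any nearest partner; here the assignment is already frozen modulo $n$ by the shift $s^\circ$ (bead $i$ must be paired with residue class $(i+s^\circ)\bmod n$), and you are only choosing the integer lift $k_i$. For a fixed, non-optimal $(s,c)$ it is easy to build $\vec x,\vec y$ where the resulting $j_i$ are not monotone at all (e.g.\ one difference rounds up and the next rounds down because the gaps in $\vec y$ are near $1/2$). So monotonicity is not a consequence of a nearest-neighbor fact plus distinct residues; if it holds, it holds because $(s^\circ,c^\circ)$ is \emph{optimal} and the optimal circular matching is non-crossing (the quadrangle-inequality fact the paper records before the theorem), and that link is what actually needs to be argued.

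Second, the ``no gaps'' step. You claim a gap $j_{i+1}-j_i>1$ lets you strictly lower the circular cost by an infinitesimal perturbation of $c^\circ$, but I don't see why. The circular objective as a function of $c$ is piecewise linear (for $\ell_1$) or smooth (for $\ell_2$), and at an optimum its derivative information says nothing directly about the lifted indices $j_i$; a gap in $\{j_i\}$ is a combinatorial statement about which fundamental domain each matched $\tilde y$-copy falls in, not a sign condition on a directional derivative. Consecutiveness is really a \emph{structural} consequence of non-crossing plus the fact that, once you cut the circle at a single point, the cyclic shift unrolls into exactly two contiguous blocks with lift parameters $k$ and $k+1$. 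The paper sidesteps both of your difficulties by constructing the lift explicitly via a three-way case split (edge crosses $0$ with $(x_i+c)\bmod 1$ below / above its partner, or edge does not cross $0$), from which contiguity and exact length preservation are read off edge by edge (using Observation~\ref{min edge length}). To repair your plan you would either need to reproduce something like that case analysis, or give a correct standalone proof that the optimal non-crossing circular matching unrolls to a contiguous block --- neither the nearest-neighbor fact nor the perturbation argument, as stated, delivers that.
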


\begin{proof}
First we show that the value of any optimal matching of a set of beads along $L$ is at least 
as large as the value of the optimal solution of the beads along $C$.
Given an optimal matching $M^{-*}=\{s,c\}$ beads along $L$, 
we will ``wrap'' the line $L$ around a unit circle $C$
by mapping each of the $n$ \emph{matched} 
beads $x_i$ ($yy_{i+s}$)  along~$L$ to $x_i^\circ$ ($y_{(i+s)\bmod n}^\circ$) along~$C$.
(Thus we have exactly $n$ pairs of beads along~$C$.) 
Now, for every $i=0,1,\dots,n-1$, the length of an edge of $M^{-*}$ is equal to
\begin{eqnarray*}
&&d^-(x_i+c,yy_{i+s}) \\
&=&d^-(x_i+c,y_{(i+s) \bmod n}) \\ 
&=& |x_i+c - y_{(i+s)\bmod n}|\\ 
&\geq& \min \{\left| (x_i + c)\bmod 1 - y_{(i+s)\bmod n}\right| , \displaystyle(1 - \left| (x_i + c)\bmod 1 - y_{(i+s)\bmod n}\right|\displaystyle)\}\\ 
&=& d^\circ((x_i^\circ + c)\bmod 1, y_{(i+s)\bmod n}^\circ).
\end{eqnarray*}
Thus, as every edge length of the matching $M^{-*}$ is at least as large as its corresponding edge along the circle $C$, we have $|M^{-*}| \geq |M^{\circ*}|$.

Next we show that the value of any optimal matching of a set of beads along $C$ is at least 
as large as the value of the optimal solution of the beads along $L$.
Suppose we have an optimal matching $M^{\circ*}=(s,c)$. 
We map every point $x_i+c$ and $y_i$ to the infinite line segment so that
the edges of $M^{\circ*}$ are  preserved in $M^{-}$.
Thus, for all $i = 0,1,\dots,n-1$ and $k \in \mathbb{Z}$, we map
\begin{eqnarray*}
(x_i + c)\bmod 1 &\mapsto& x^-_{i} = x_i + c, \\  
y_i &\mapsto& y^-_{i+kn} = y_i + k.\\
\end{eqnarray*}

With this transformation, in any valid matching $M^{-}$, the matched beads span at most two consecutive intervals $[k,k+1)$ and $[k+1,k+2)$
for any $k \in \mathbb{Z}$. In particular, the beads $x_i + c$ span the intervals $[0,1)$ and $[1,2)$.

Now we construct $M^{-}$ given $M^{\circ*}$ by matching every $x^-_{i}$ to $y^-_{i+s+kn}$ such that,
whenever $x_i+c < 1$ (see Figure~\ref{unrolling}),
\begin{itemize}
\item $k=-1$ \\ if $((x_i+c)\bmod 1, y_{(i+s)\bmod n})$ crosses point $0$ and $(x_i+c)\bmod 1 < y_{(i+s)\bmod n}$;
\item $k=1$ \\ if $((x_i+c)\bmod 1, y_{(i+s)\bmod n})$ crosses point $0$ and $(x_i+c)\bmod 1 > y_{(i+s)\bmod n}$;
\item $k=0$ \\ if $((x_i+c)\bmod 1, y_{(i+s)\bmod n})$ does not cross point $0$;
\end{itemize}
and, whenever $x_i+c \geq 1$, we increment $k$ by $1$ in each of the cases.
Thus, when $x_i+c \geq 1$,
\begin{itemize}
\item $k=0$ \\ if $((x_i+c)\bmod 1, y_{(i+s)\bmod n})$ crosses point $0$ and $(x_i+c)\bmod 1 < y_{(i+s)\bmod n}$;
\item $k=2$ \\ if $((x_i+c)\bmod 1, y_{(i+s)\bmod n})$ crosses point $0$ and $(x_i+c)\bmod 1 > y_{(i+s)\bmod n}$;
\item $k=1$ \\ if $((x_i+c)\bmod 1, y_{(i+s)\bmod n})$ does not cross point $0$.
\end{itemize}

\begin{figure}[t]
\centering
\subfigure[A matching $M^{\circ}$ of beads along a circular necklace with three types of edges: 
   those that do not cross point $0$, 
   and those that cross point $0$ 
   with $(x_i+c) \bmod 1$ either greater or less than~$y_{i+s}$.]{
   \includegraphics[scale=0.7]{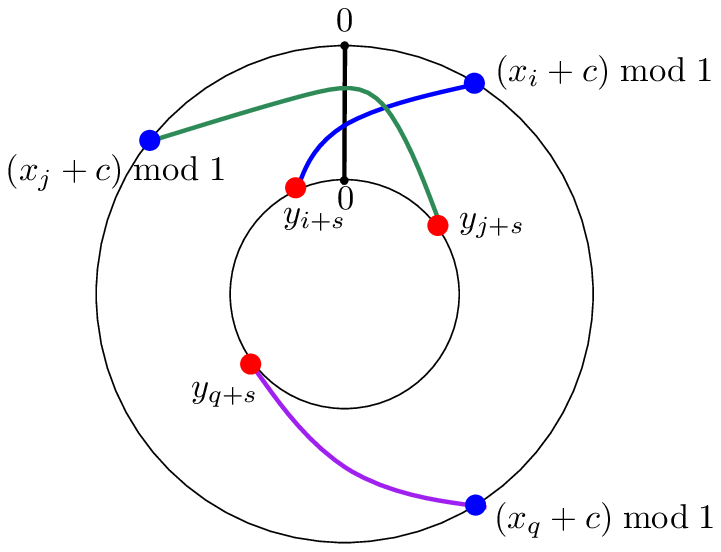}
   \label{fig:circle}
 }\hskip0.35cm
 \subfigure[Unrolling a necklace to a line $L$ by preserving the edges of $M^{\circ}$ in the  matching $M^{-}$.  
   In this example, $0 \leq x_0+c < 1$.]{
   \includegraphics[scale=0.8]{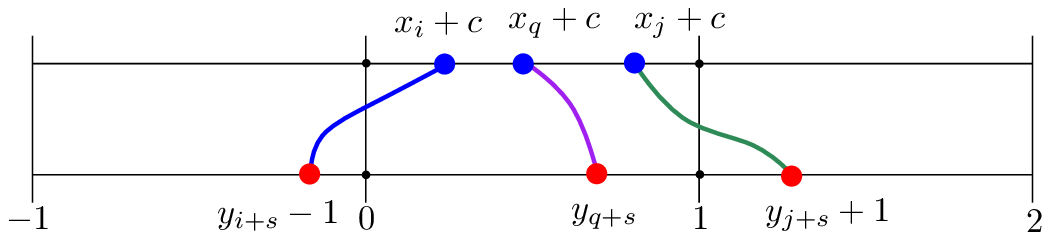}
   \label{fig:line}
 }
\caption{Unrolling a circular necklace to a line.}
\label{unrolling}
\end{figure}

Here, the variable $k$ basically decides the interval $[-1,0)$, $[0,1)$, or $[1,2)$ in 
which the bead $y_{(i+s)\bmod n}$ is located, based on the type of the edge $((x_i+c)\bmod 1, y_{(i+s)\bmod n})$.
Observe that, if an edge in $M^{\circ*}$ crosses point $0$, 
then its corresponding edge in $M^{-}$ crosses $(r,r)$ for some $r \in \{0,1,2\}$.

Now, the sum of the distances of the matched beads of $M^{-}$ is equal to
\begin{align*}
&d^-(x_0+c, y_{(0+s) \bmod n}+k) + d^-(x_1+c, y_{(1+s) \bmod n}+k) + \dots + \\
&d^-(x_{n-s-1}+c, y_{n-1}+k)+ d^-(x_{n-s}+c, y_{0}+k+1) +\dots + \\
&d^-(x_{n-s-1}+c, y_{n-1}+k+1).
\end{align*}

We claim that the value of $M^{\circ*}$ is equal to at least the value of this matching $M^{-}$. 
We show this claim by comparing the length of each edge $((x_i+c)\bmod 1, y_{i+s})$ of $M^{\circ*}$ with its corresponding edge in $M^{-}$.

\paragraph{Edges that do not cross point \boldmath{$0$}:}
If an edge of $M^{\circ*}$ does not cross point $0$, then the corresponding edge in $M^{-}$ 
does not cross any of the edges $(0,0)$, $(1,1)$ or $(2,2)$; 
hence both endpoints (beads) of the given matching edge are within the same interval $[r, r+1)$ for some $r\in\{0,1\}$
(the purple edge $(x_q+c,y_{q+s})$ in Figure~\ref{unrolling}).
This means that, when $x_i+c = (x_i+c)\bmod 1$, we have $k=0$ and 
$$
\begin{array}{ll}
d^-(x_i+c, y_{(i+s)\bmod n}+k) &= |(x_i+c) - (y_{(i+s)\bmod n}+k)| \\
&= |(x_i+c)\bmod 1 - (y_{(i+s)\bmod n})| \\
&= |x_i^- - y^-_{(i+s)\bmod n}| \\
&= d^\circ(x_i^-, y_{(i+s)\bmod n}^-)  ${\rm \indent(by~Observation~\ref{min edge length}).}$\\
\end{array}
$$

We can similarly show that, when $x_i+c = (x_i+c)\bmod 1 + 1$, we have $k=1$ and the
edges  of $M^{\circ*}$ that do not cross point $0$ have the same length as their corresponding edge in $M^{-}$.

\paragraph{Edges that cross point \boldmath{$0$}:}
If an edge of $M^{\circ*}$ crosses point $0$, then the corresponding edge 
in $M^{-}$ crosses edge $(r,r)$ and hence 
the two endpoints (beads) of the given edge of $M^{-}$ must be in different and consecutive intervals: 
$[r-1, r)$ and $[r, r+1)$ for some $r\in \{0,1,2\}$
(the green and blue edges $(x_i+c,y_i+s)$,$(x_j+c,y_j+s)$ in Figure~\ref{unrolling}).
Then, assuming $x_i + c = (x_i+c)\bmod 1$, we have

\begin{itemize}
\item
When $(x_i+c)\bmod 1 < y_{(i+s)\bmod n}$, $k = -1$ and 
$$
\begin{array}{ll}
d^-(x_i+c, y_{(i+s)\bmod n}+k) &= |(x_i+c)  - (y_{(i+s)\bmod n}-1)| \\
&= |(x_i+c)\bmod 1 - y_{(i+s)\bmod n} + 1|\\
&= \displaystyle(1 - |((x_i+c)\bmod 1) - y_{(i+s)\bmod n}|\displaystyle)\\
&= \displaystyle(1 - |x_i^- - y^-_{((i+s)\bmod n)-n}|\displaystyle)\\
&= d^\circ(x_i^-, y^-_{((i+s)\bmod n)-n}) ${\rm \indent(by~Observation~\ref{min edge length}).}$\\
\end{array}
$$

\item
When $(x_i+c)\bmod 1 > y_{(i+s)\bmod n}$, $k = 1$ and 
$$
\begin{array}{ll}
d^-(x_i+c, y_{(i+s)\bmod n}+k) &= |(x_i+c)  - (y_{(i+s)\bmod n}+1)| \\
&= |(x_i+c)\bmod 1 - y_{(i+s)\bmod n} - 1|\\
&= \displaystyle(1 - |((x_i+c)\bmod 1) - y_{(i+s)\bmod n}|\displaystyle)\\
&= \displaystyle(1 - |x_i^- - y^-_{((i+s)\bmod n)+ n}|\displaystyle)\\
&= d^\circ(x^-_i+c, y^-_{((i+s)\bmod n)+n})  ${\rm \indent(by~Observation~\ref{min edge length}).}$\\
\end{array}
$$
We can similarly show that, when $x_i+c = (x_i+c)\bmod 1 + 1$,
the edges  of $M^{\circ*}$ that cross point $0$ have the same length as their corresponding edge in $M^{-}$.

\end{itemize}

Therefore, the length of every edge of $M^{\circ*}$ along the circle is equal to the length of its corresponding edge in $M^{-}$.
Thus, the value of the matching $M^{\circ*}$ is at least as large as that of $M^{-*}$, completing the proof of the theorem.
\end{proof}

We now proceed to prove our results by using the objective function \eqref{lin}.

\section{$\ell_2$ Necklace Alignment and $(+,\cdot)$ Convolution}
\label{sec:l_2}

In this section, we first show how $\ell_2$ necklace alignment reduces
to standard convolution, leading to an $O(n \lg n)$-time algorithm that uses the Fast Fourier Transform.
We then show how this result generalizes to $\ell_p$ for any even $p$.
It should be noted here that the $\ell_2$ necklace alignment problem was solved independently by Clifford et al.~\cite{clifford-04} (see Problem 5)
using Fast Fourier Transforms. Our proof uses essentially the same technique of expanding the squared term and then optimizing terms separately,
but goes through the steps in more detail;
we include our proof for completeness. 
More results that use the FFT to solve different flavors of matching problems may also be found in~\cite{clifford-05} and \cite{clifford-07}.

\begin{theorem} \label{l_2}
  The $\ell_2$ necklace alignment problem can be solved in $O(n \lg n)$ time
  on a real RAM.
\end{theorem}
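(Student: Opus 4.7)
By the linear-versus-circular reduction established in the previous section, it suffices to solve the linear alignment problem with $p=2$: find $s\in\{0,1,\dots,n\}$ and $c\in\R$ minimizing
\[
F(s,c)\;=\;\sum_{i=0}^{n-1}(x_i+c-y_{i+s})^2,
\]
where $\vec{y}$ is the doubled vector of length $2n$. The plan is to optimize $c$ in closed form for each fixed $s$, reducing the problem to a single nontrivial $s$-indexed quantity that is a cross-correlation; this last piece is exactly what FFT computes in $O(n\lg n)$ time.

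For fixed $s$, write $a_i=x_i-y_{i+s}$. Then $F(s,c)=\sum_i a_i^2+2c\sum_i a_i+nc^2$ is a convex quadratic in $c$, minimized at $c^*(s)=-\frac1n\sum_i a_i$, giving
\[
\min_c F(s,c)\;=\;\sum_{i=0}^{n-1}(x_i-y_{i+s})^2\;-\;\frac{1}{n}\Bigl(\sum_{i=0}^{n-1}(x_i-y_{i+s})\Bigr)^{\!2}.
\]
Expanding $(x_i-y_{i+s})^2=x_i^2+y_{i+s}^2-2x_iy_{i+s}$, every term except the cross term splits into a part depending only on $\vec{x}$ (a constant in $s$) and a windowed sum over $\vec{y}$ or $\vec{y}^{\,2}$ of length $n$ starting at index $s$. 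Both $\sum_{i=0}^{n-1} y_{i+s}$ and $\sum_{i=0}^{n-1} y_{i+s}^2$ can be tabulated for all $s\in\{0,\dots,n\}$ in $O(n)$ time via prefix sums on $\vec{y}$ and $\vec{y}^{\,2}$.

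The only remaining quantity is $C(s):=\sum_{i=0}^{n-1} x_i\,y_{i+s}$. Reversing $\vec{x}$ to form $\tilde{x}_j=x_{n-1-j}$, the standard $(+,\cdot)$ convolution $\tilde{x}\convolve \vec{y}$ (with $\vec{y}$ of length $2n$) has $k$th entry $\sum_j \tilde{x}_j\,y_{k-j}=\sum_i x_i\,y_{k-(n-1)+i}$, so reading off position $k=n-1+s$ yields exactly $C(s)$. Computing this convolution by FFT takes $O(n\lg n)$ time on the real RAM \cite{Cooley-Tukey-1965}. With all three families of $O(n)$-many values in hand, evaluate $\min_c F(s,c)$ in $O(1)$ time for each $s$, take the minimum, and recover $c^*(s)$ from the formula above.

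There is no genuinely hard step; the only thing to be careful about is the bookkeeping of the cross-correlation, namely lining up the index shift so that FFT-based convolution outputs $C(s)$ for every admissible shift $s\in\{0,\dots,n\}$, which is why $\vec{y}$ is taken of length $2n$. The total running time is $O(n)+O(n\lg n)+O(n)=O(n\lg n)$, as claimed.
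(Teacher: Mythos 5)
Your proof is correct and follows essentially the same approach as the paper: expand the squared term, observe that after optimizing $c$ in closed form the only $s$-dependent quantity requiring nontrivial work is the cross-correlation $\sum_i x_i y_{i+s}$, and compute it for all $s$ via one FFT-based $(+,\cdot)$ convolution after padding/reversing. The one small point you could streamline is that with $\vec y$ doubled, the windowed sums $\sum_i y_{i+s}$ and $\sum_i y_{i+s}^2$ are actually independent of $s$ (each window is a cyclic permutation of $\{y_0,\dots,y_{n-1}\}$), so prefix sums are unnecessary and, as the paper observes, the optimal $c$ is the same for every $s$; but your more general bookkeeping is still $O(n)$ and gives the same $O(n\lg n)$ bound.
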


\begin{proof}
  The objective \eqref{lin} expands algebraically to
  \begin{eqnarray*}
    &&\sum_{i=0}^{n-1} \left( x_i - y_{(i+s) \bmod n} + c \right)^2\\
    &=& \sum_{i=0}^{n-1} \left( x_i^2 + y_{(i+s) \bmod n}^2 +
                                2 c x_i - 2c y_{(i+s) \bmod n} + c^2 \right)
      - 2 \sum_{i=0}^{n-1} x_i y_{(i+s) \bmod n} \\
    &=& \sum_{i=0}^{n-1} \left( x_i^2 + y_i^2 +
                                2 c x_i - 2c y_i + c^2 \right)
      - 2 \sum_{i=0}^{n-1} x_i y_{(i+s) \bmod n} \\
    &=& \left[
          \sum_{i=0}^{n-1} \left( x_i^2 + y_i^2 \right) +
          2 c \sum_{i=0}^{n-1} \left( x_i - y_i \right) +
          n c^2
        \right]
      - 2 \sum_{i=0}^{n-1} x_i y_{(i+s) \bmod n}. \\
  \end{eqnarray*}
  The first term depends solely on the inputs and the variable~$c$,
  while the second term depends solely on the inputs and the variable~$s$.
  Thus the two terms can be optimized separately.
  The first term can be optimized in $O(n)$ time by solving for when the
  derivative, which is linear in~$c$, is zero.
  The second term can be computed, for each $s \in \{0, 1, \dots, n-1\}$,
  in $O(n \lg n)$ time using $(+,\cdot)$ convolution
  (and therefore optimized in the same time).
  Specifically, define the vectors
  \begin{eqnarray*}
    \vec x' &=& \langle x_0, x_1, \dots, x_{n-1};
                        \underbrace{0, 0, \dots, 0}_n \rangle, \\
    \vec y' &=& \langle y_{n-1}, y_{n-2}, \dots, y_0;
                        y_{n-1}, y_{n-2}, \dots, y_0 \rangle.
  \end{eqnarray*}
  Then, for $s' \in \{0, 1, \dots, n-1\}$, the $(n+s')$th entry of the
  convolution $\vec x' \convolve \vec y'$ is
  $$ \sum_{i=0}^{n+s'} x'_i y'_{n+s'-i}
   = \sum_{i=0}^{n-1} x_i y_{(i-s'-1) \bmod n},
  $$
  which is the desired entry if we let $s' = n-1-s$.
  We can compute the entire convolution in $O(n \lg n)$ time
  using the Fast Fourier Transform.
\end{proof}

The above result can be generalized to $\ell_p$ for any fixed \emph{even} integer $p$.
When $p \geq 4$, expanding the objective and rearranging the terms results in
\begin{align*}
\sum_{i=0}^{n-1} (x_i-y_{(i+s)\bmod n} + c)^p &= \sum_{i=0}^{n-1}\sum_{j=0}^{p}{p \choose j}(x_i-y_{(i+s)\bmod n})^{p-j}c^j\\
&=\sum_{j=0}^{p}\left({p \choose j}\sum_{i=0}^{n-1}(x_i-y_{(i+s)\bmod n})^{p-j}\right)c^j,\\
\end{align*}
which is a degree-$p$ polynomial in $c$, all of whose coefficients can be computed
for all values of $s$ by computing $O(p^2)$ convolutions.

\begin{theorem} \label{l_p}
  The $\ell_p$ necklace alignment problem with $p$ even can be solved in $O(p^2n \lg n)$ time
  on a real RAM.
\end{theorem}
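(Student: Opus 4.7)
The plan is to push the already-established expansion one more step: for each value of $s$, the objective is an explicit degree-$p$ polynomial in the offset $c$ whose coefficients depend on $s$, so if we can compute all these coefficients in $O(p^2 n \lg n)$ total time across all $s$ and then minimize each polynomial quickly, we are done. Starting from the expression
\[
  \sum_{i=0}^{n-1}(x_i-y_{(i+s)\bmod n}+c)^p
  =\sum_{j=0}^{p}\binom{p}{j}\, C_{p-j}(s)\, c^{j},
  \qquad
  C_m(s):=\sum_{i=0}^{n-1}(x_i-y_{(i+s)\bmod n})^{m},
\]
given in the paragraph preceding the theorem, I would apply the binomial theorem a second time, writing
\[
  C_m(s)=\sum_{k=0}^{m}\binom{m}{k}(-1)^{k}\,\Sigma_{m-k,\,k}(s),
  \qquad
  \Sigma_{a,b}(s):=\sum_{i=0}^{n-1}x_i^{\,a}\,y_{(i+s)\bmod n}^{\,b}.
\]
This decomposes every coefficient of the polynomial into a linear combination of quantities $\Sigma_{a,b}(s)$ with $a+b\le p$.

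Next, I would observe that each function $s\mapsto \Sigma_{a,b}(s)$ is precisely a cyclic correlation of the fixed sequences $\langle x_i^a\rangle$ and $\langle y_i^b\rangle$, so all its values can be obtained, exactly as in the proof of Theorem~\ref{l_2}, by a single $(+,\cdot)$ convolution of length $O(n)$ computed via the FFT in $O(n\lg n)$ time. The number of relevant pairs $(a,b)$ with $0\le a+b\le p$ is $\binom{p+2}{2}=O(p^2)$, giving a total precomputation cost of $O(p^2 n\lg n)$. Once these tables are available, assembling the $p+1$ coefficients of the polynomial in $c$ for a single $s$ takes $O(p^2)$ time (an $O(p)$-term binomial sum for each of the $p+1$ coefficients), and thus $O(p^2 n)$ time overall.

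It remains to minimize, for each $s$, the resulting univariate polynomial $F_s(c)$ of degree $p$. Here the crucial point is that since $p$ is even, every summand $(x_i-y_{(i+s)\bmod n}+c)^p$ is a convex function of $c$, so $F_s$ itself is convex with positive leading coefficient; consequently it has a unique minimizer, which is the real root of the explicit degree-$(p-1)$ polynomial $F_s'(c)$. Treating $p$ as a constant (the regime in which the $O(p^2 n\lg n)$ bound is usually stated), this root is found in $O(1)$ time on the real RAM by closed-form methods, so the minimization costs $O(n)$ across all shifts and is absorbed into the precomputation bound $O(p^2 n\lg n)$.

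The main obstacle I anticipate is a modeling one rather than a combinatorial one: the exact cost of locating the minimizer of a degree-$p$ convex polynomial on the real RAM when $p$ is not a fixed constant. Under the reading that $p$ is constant, the argument above is tight. If one wishes to keep the $p$-dependence explicit in the running time, I would add a remark that the minimum of $F_s$ can be isolated by solving $F_s'(c)=0$ with any standard convex-polynomial root-finder in time polynomial in $p$, which is swallowed by the $\lg n$ factor in the precomputation term and therefore does not alter the stated $O(p^2 n\lg n)$ bound.
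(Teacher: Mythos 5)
Your proposal is correct and follows essentially the same route as the paper's: expand the objective as a degree-$p$ polynomial in $c$ and compute all coefficients via $O(p^2)$ FFT-based convolutions. The paper's proof stops at asserting this count, whereas you supply the missing detail (the second binomial expansion reducing each $C_m(s)$ to the correlations $\Sigma_{a,b}(s)$) and the convexity observation that justifies locating the minimizer of $F_s(c)$ via its derivative; both additions are correct and fill gaps the paper leaves implicit.
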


\section{$\ell_\infty$ Necklace Alignment and $(\min,+)$ Convolution}
\label{sec:l_infty}

\subsection{Reducing $\ell_\infty$ Necklace Alignment to $(\min,+)$ Convolution}

First we show the relation between $\ell_\infty$ necklace alignment
and $(\min,+)$ convolution.  We need the following basic fact:

\begin{fact} \label{min max fact}
  For any vector $\vec z = \langle z_0, z_1, \dots, z_{n-1} \rangle$ and 
  $c = -\frac{1}{2} \left( \min_{i=0}^{n-1} z_i + \max_{i=0}^{n-1} z_i \right)$,
  the minimum value of $\max_{i=0}^{n-1} |z_i + c|$
  is $$\frac{1}{2} \left( \max_{i=0}^{n-1} z_i - \mIn_{i=0}^{n-1} z_i \right).$$
\end{fact}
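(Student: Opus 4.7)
The plan is to reduce the maximization over all $n$ indices to a maximization over just the two extremal values, then solve a one-dimensional optimization in $c$. Let $M = \max_{i} z_i$ and $m = \min_{i} z_i$. The first step is the observation that for any fixed $c$, the function $t \mapsto |t + c|$ is convex on $\mathbb{R}$, so its maximum over the finite set $\{z_0,\dots,z_{n-1}\}$ is attained at one of the extreme values $m$ or $M$. Hence
\[
  \max_{i=0}^{n-1} |z_i + c| \;=\; \max\bigl(|M+c|,\ |m+c|\bigr).
\]

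The second step is a lower bound valid for every $c \in \mathbb{R}$. Using $\max(a,b) \geq \tfrac{1}{2}(a+b)$ for $a,b \geq 0$ and the reverse triangle inequality, I would write
\[
  \max\bigl(|M+c|,\ |m+c|\bigr) \;\geq\; \tfrac{1}{2}\bigl(|M+c| + |m+c|\bigr) \;\geq\; \tfrac{1}{2}\bigl|(M+c) - (m+c)\bigr| \;=\; \tfrac{1}{2}(M-m).
\]
This shows that $(M-m)/2$ is a universal lower bound on $\max_i |z_i + c|$.

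The third step is to verify that the specific choice $c = -\tfrac{1}{2}(m + M)$ attains this bound. Plugging in gives $M + c = \tfrac{1}{2}(M - m)$ and $m + c = -\tfrac{1}{2}(M - m)$, so $|M+c| = |m+c| = \tfrac{1}{2}(M-m)$, matching the lower bound. Combining with step one yields $\max_i |z_i + c| = \tfrac{1}{2}(M-m)$, completing the proof. There is no real obstacle here — the only subtlety is remembering to invoke convexity (equivalently, that $|t+c|$ attains its maximum over $\{z_i\}$ at $m$ or $M$) so that the whole argument collapses to a two-point problem.
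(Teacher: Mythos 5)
Your proof is correct. The paper states this Fact without proof, treating it as a basic observation, so there is no paper argument to compare against; your reduction to the two extremal values (noting $|t+c|$ is convex, so its maximum over $\{z_i\}\subset[m,M]$ occurs at $m$ or $M$), followed by the lower bound $\max(|M+c|,|m+c|)\geq\tfrac12|(M+c)-(m+c)|=\tfrac12(M-m)$ and the verification that $c=-\tfrac12(m+M)$ attains it, is the standard and complete argument.
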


Instead of using $(\min,+)$ convolution directly,
we use two equivalent forms, $(\min,-)$ and $(\max,-)$ convolution:

\begin{theorem} \label{l_infty reduction}
  The $\ell_\infty$ necklace alignment problem can be reduced in $O(n)$ time
  to one $(\min,-)$ convolution and one $(\max,-)$ convolution.
\end{theorem}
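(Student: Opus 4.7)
The plan is to separate the two optimization variables $c$ and $s$: for any fixed shift $s$, Fact~\ref{min max fact} gives the optimal offset $c$ in closed form, so the whole problem collapses to computing, for each $s$, the range of a certain difference vector. That range can then be read off from the outputs of one $(\max,-)$ and one $(\min,-)$ convolution.

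Working in the linear-alignment formulation from Section~2, with the doubled vector $\vec{yy}$ of length $N=2n$, I would fix $s \in \{0,\dots,n-1\}$ and set $z^{(s)}_i := x_i - yy_{s+i}$ for $i=0,\dots,n-1$. The linear $\ell_\infty$ objective at shift $s$ is $\max_i |z^{(s)}_i + c|$, so Fact~\ref{min max fact} tells us that the best $c$ for this shift achieves value $\tfrac{1}{2}\bigl(\max_i z^{(s)}_i - \min_i z^{(s)}_i\bigr)$ at $c = -\tfrac{1}{2}\bigl(\min_i z^{(s)}_i + \max_i z^{(s)}_i\bigr)$. Hence the problem reduces to computing, for every $s \in \{0,\dots,n-1\}$, both $M^{(s)} := \max_i z^{(s)}_i$ and $m^{(s)} := \min_i z^{(s)}_i$, picking the shift of smallest range, and recovering $c$ in $O(1)$.

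To get all $M^{(s)}$ from a single $(\max,-)$ convolution, I would reverse $\vec{yy}$ to $\vec{yy}^R := \langle yy_{N-1},\dots,yy_0\rangle$. With $k := N-1-s$ one has $yy_{s+i} = yy^R_{k-i}$, so
\begin{equation*}
M^{(s)} \;=\; \max_{i=0}^{n-1}\bigl( x_i - yy^R_{k-i} \bigr),
\end{equation*}
which is exactly the $k$-th entry of $\vec x \maxconvolve \vec{yy}^R$ once $\vec x$ is padded to length $N$ with $-\infty$'s (so that summands with $i \ge n$ are suppressed and the convolution's implicit sum range matches our intended range $0 \le i \le n-1$). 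The symmetric construction, padding $\vec x$ with $+\infty$'s and taking the $(\min,-)$ convolution $\vec x \minconvolve \vec{yy}^R$, delivers all $m^{(s)}$. Reading off the $n$ relevant entries from each output takes $O(n)$ time.

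The final step is $O(n)$ post-processing: pick $s^* = \arg\min_s (M^{(s)} - m^{(s)})$ and compute $c^* = -\tfrac{1}{2}\bigl(m^{(s^*)} + M^{(s^*)}\bigr)$ via Fact~\ref{min max fact}. The only real subtlety---rather mild---is the index bookkeeping in the reduction: getting the reversal and the $\pm\infty$ padding aligned so that the standard convolution indices $(k,i,k-i)$ correspond to the problem indices $(s,i,s+i)$ over exactly the right range. Once that is set up, everything outside the two convolution calls is clearly $O(n)$, establishing the claimed reduction.
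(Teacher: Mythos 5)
Your proposal is correct and follows essentially the same approach as the paper's proof: pad $\vec x$ with $\pm\infty$ values, reverse (the doubled) $\vec y$, read off entries $n$ through $2n-1$ of the $(\min,-)$ and $(\max,-)$ convolutions to obtain the range of the difference vector for each shift $s$, and apply Fact~\ref{min max fact} with $O(n)$ postprocessing to pick the best $(s,c)$. The only difference is notational: you parameterize via the explicit doubled vector $\vec{yy}$ and $k=2n-1-s$, whereas the paper uses modular indexing $y_{(i+s)\bmod n}$ and $s'=n-1-s$, but the two are the same reduction.
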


\begin{proof}
  For two necklaces $\vec x$ and $\vec y$, we apply the $(\min,-)$ convolution
  to the following vectors:
  \begin{eqnarray*}
    \vec x' &=& \langle x_0, x_1, \dots, x_{n-1};
                        \underbrace{\infty, \infty, \dots, \infty}_n \rangle, \\
    \vec y' &=& \langle y_{n-1}, y_{n-2}, \dots, y_0;
                        y_{n-1}, y_{n-2}, \dots, y_0 \rangle.
  \end{eqnarray*}
  Then, for $s' \in \{0, 1, \dots, n-1\}$, the $(n+s')$th entry of
  $\vec x' \minconvolve \vec y'$ is
  $$ \mIn_{i=0}^{n+s'} (x'_i - y'_{n+s'-i})
   = \mIn_{i=0}^{n-1} (x_i - y_{(i-s'-1) \bmod n}),
  $$
  which is $\min_{i=0}^{n-1} (x_i - y_{(i+s) \bmod n})$
  if we let $s' = n-1-s$.
  By symmetry, we can compute the $(\max,-)$ convolution
  $\vec x'' \maxconvolve \vec y'$, where $\vec x''$ has $-\infty$'s
  in place of $\infty$'s, and use it to compute
  $\max_{i=0}^{n-1} (x_i - y_{(i+s) \bmod n})$
  for each $s \in \{0, 1, \dots, n-1\}$.
  Applying Fact~\ref{min max fact}, we can therefore minimize
  $\max_{i=0}^{n-1} |x_i - y_{(i+s) \bmod n} + c|$ over~$c$,
  for each $s \in \{0, 1, \dots, n-1\}$.
  By brute force, we can minimize over $s$ as well using $O(n)$
  additional comparisons and time.
\end{proof}

\subsection{$(\min,-)$ Convolution in Nonuniform Linear Decision Tree}

For our nonuniform linear decision tree results, we use the main theorem
of Fredman's work on sorting $X+Y$:

\begin{theorem} {\rm \cite{Fredman-1976-XpY}} \label{Fredman lemma}
  For any fixed set $\Gamma$ of permutations of $N$ elements,
  there is a comparison tree of depth $O(N + \lg |\Gamma|)$ that sorts
  any sequence whose rank permutation belongs to $\Gamma$.
\end{theorem}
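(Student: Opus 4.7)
The plan is to build the desired comparison tree top-down, maintaining at each internal node the subset $\Gamma' \subseteq \Gamma$ of permutations still consistent with the comparisons performed along the root-to-node path, together with the partial order $P$ that those comparisons induce on the $N$ elements. At a leaf the set $\Gamma'$ must be a singleton, so the task reduces to showing that a clever choice of comparison at each internal node decreases an appropriate potential function, with the total potential drop bounded by $O(N + \lg |\Gamma|)$.

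First I would dispose of the base case. When $|\Gamma'|=1$ only one total order is consistent with the comparisons performed so far, and the remaining unverified pairs can be certified by walking down the $N-1$ consecutive comparisons of this unique permutation. This contributes the $O(N)$ additive term. The interesting regime is $|\Gamma'|>1$: there must then exist some pair $(i,j)$ whose relative order is not forced by $P$ and that is ``separating,'' meaning some $\pi\in\Gamma'$ places $a_i$ before $a_j$ while some other $\pi'\in\Gamma'$ reverses them. If one could always select a separating comparison that halves $|\Gamma'|$, the depth in this phase would be $\lg|\Gamma|$ and we would be done; the difficulty is that no halving comparison need exist.

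To handle this, the plan is to introduce a potential of the form
\[
\Phi \;=\; \alpha \cdot \#\{\text{pairs not yet ordered by } P\} \;+\; \beta\cdot \lg |\Gamma'|
\]
for suitable constants $\alpha,\beta$, and to show by a counting argument that at every state with $|\Gamma'|>1$ there exists a comparison whose two possible outcomes each decrease $\Phi$ by at least $1$. The heart of the claim is a dichotomy: either some separating comparison splits $\Gamma'$ roughly evenly (good informational progress), or a large fraction of separating comparisons are almost forced, in which case performing any such comparison quickly converts ignorance about $P$ into decided pairs (good structural progress). A cleaner alternative route is a merge-style divide-and-conquer: partition the $N$ elements into halves $A$ and $B$, recurse on the projections $\Gamma_A,\Gamma_B$ (each of size at most $|\Gamma|$), then merge the sorted halves using the standard $|A|+|B|-1$ comparisons, pruned by the constraint that the merged order lie in $\Gamma$ and accelerated by binary search over the consistent interleavings.

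The main obstacle is tightening the accounting so that the bound comes out as $N+\lg|\Gamma|$ rather than the $O(N\lg N)$ that a naive merge-sort recursion would yield. The crux of Fredman's argument is precisely the decoupling of the structural cost of sorting $N$ items from the informational cost of distinguishing among $|\Gamma|$ permutations, and the delicate step is establishing that the per-comparison decrease of $\Phi$ (or, in the merge formulation, the cost of the merge step beyond the unavoidable $|A|+|B|-1$ comparisons) is controlled by the current uncertainty $\lg|\Gamma'|$ uniformly across states, independent of $N$.
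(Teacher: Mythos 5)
The paper gives no proof of this statement --- it is cited verbatim from Fredman~\cite{Fredman-1976-XpY} --- so the comparison is against Fredman's own argument rather than an in-paper one. Your sketch identifies the right tension (decoupling the structural $O(N)$ cost from the informational $O(\lg|\Gamma|)$ cost) but, as you concede at the end, does not close it, and the specific devices you propose cannot. The potential $\Phi = \alpha\cdot\#\{\text{pairs unordered by }P\} + \beta\lg|\Gamma'|$ cannot certify the bound: the pair term starts at $\alpha\binom{N}{2}=\Theta(\alpha N^2)$, so either $\alpha=\Theta(1)$ and the total drop is $\Theta(N^2)$, or $\alpha=O(1/N)$ and the ``each comparison drops $\Phi$ by at least $1$'' claim fails. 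Restricting to \emph{ambiguous} pairs does not rescue this, since with $\Gamma=S_N$ there are $\binom{N}{2}$ ambiguous pairs but only $\Theta(N\lg N)$ bits of information. The dichotomy you assert (either a balanced split exists, or comparisons are nearly forced) is precisely where the difficulty lives and is not established: there are states in which every remaining comparison is lopsided, the ``likely'' branch decides only one new pair, and the ``unlikely'' branch removes only a tiny fraction of $\Gamma'$, so neither term of $\Phi$ falls usefully. Your merge-sort alternative also stalls, for the reason you flag: the recurrence pays $\Theta(N)$ per level for the merge and gives $\Theta(N\lg N)$, and you offer no mechanism to avoid that term.

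The idea you are missing is Fredman's: insert one element at a time and locate its slot with a \emph{weight-balanced} binary search, charged against the drop in $|\Gamma'|$. Maintain the consistent set $\Gamma'$ and the sorted list of already-processed elements. When inserting $a_k$, each position $p\in\{0,\dots,k-1\}$ in that list corresponds to a block $\Gamma'_p\subseteq\Gamma'$ of permutations consistent with $a_k$ landing there, and these blocks partition $\Gamma'$. Perform the comparisons in the order dictated by a binary search tree over the positions that is balanced by the weights $|\Gamma'_p|$; the standard weight-balanced-BST depth bound gives that reaching block $\Gamma'_p$ costs $O\bigl(1+\lg(|\Gamma'|/|\Gamma'_p|)\bigr)$ comparisons, after which we set $\Gamma'\leftarrow\Gamma'_p$. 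Summing over the $N$ insertions, the logarithmic contributions telescope from $\lg|\Gamma|$ to $0$ while the $O(1)$ overheads sum to $O(N)$, giving the claimed $O(N+\lg|\Gamma|)$. This per-element telescoping --- each insertion paying for the multiplicative shrinkage of $\Gamma'$ it causes, plus a constant --- is exactly the clean decoupling your sketch was reaching for but does not obtain from a global potential over pairs or from a merge recursion.
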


\begin{theorem} \label{min convolution nonuniform}
  The $(\min,-)$ convolution of two vectors of length $n$
  can be computed in $O(n \sqrt n)$ time
  in the nonuniform linear decision tree model.
\end{theorem}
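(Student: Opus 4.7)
The plan is to adapt Fredman's block-decomposition strategy---as in his $O(n^{2.5})$ nonuniform decision tree algorithm for all-pairs shortest paths---to $(\min,-)$ convolution, using Theorem~\ref{Fredman lemma} as the combinatorial engine. Set $t = \lceil \sqrt{n} \rceil$ and partition $\{0,\ldots,n-1\}$ into $n/t$ contiguous blocks of size $t$: write $I_a = \{at,\ldots,at+t-1\}$ for the $\vec x$-blocks and $J_b = \{bt,\ldots,bt+t-1\}$ for the $\vec y$-blocks. Each pair $(a,b)$ defines a $t\times t$ submatrix $M^{(a,b)}_{p,q} = x_{at+p} - y_{bt+q}$, whose $2t-1$ local antidiagonals map onto a consecutive range of global antidiagonals of $(x_i-y_j)_{i,j}$. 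The algorithm runs in two phases: Phase~A computes the block-level antidiagonal minima $M^{(a,b)}_k = \min_{p+q=k} M^{(a,b)}_{p,q}$, and Phase~B aggregates across block pairs, taking the minimum over the $O(\sqrt{n})$ block minima contributing to each global antidiagonal. Phase~B directly uses $O(n\sqrt{n})$ comparisons and is therefore within budget.

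The challenge lies in Phase~A, where the naive $O(t^2)$ cost per block pair totals $O(n^2)$. The crucial observation is that an intra-antidiagonal comparison $M^{(a,b)}_{p,q}$ versus $M^{(a,b)}_{p+\delta,q-\delta}$ is equivalent to the linear comparison $x_{at+p} - x_{at+p+\delta}$ versus $y_{bt+q-\delta} - y_{bt+q}$. For each offset $\delta \in \{1,\ldots,t-1\}$, I pool \emph{all} such comparisons across every block pair into a single sorting problem on the sequence $S_\delta$ consisting of the $O(n)$ $\delta$-gap $\vec x$-differences $\{x_i - x_{i+\delta}\}$ together with the $O(n)$ $\delta$-gap $\vec y$-differences $\{y_j - y_{j+\delta}\}$. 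Writing $\Gamma_\delta$ for the set of rank permutations of $S_\delta$ realizable as $\vec x, \vec y$ range over $\mathbb{R}^{2n}$, Theorem~\ref{Fredman lemma} sorts $S_\delta$ in $O(|S_\delta| + \log|\Gamma_\delta|)$ comparisons. Once all $S_\delta$ are sorted, every intra-antidiagonal comparison in every block pair is decided, so the Phase~A minima are extracted by a single linear scan per block pair. Provided $\log|\Gamma_\delta| = O(n)$, summing over $\delta \in \{1,\ldots,t-1\}$ yields $O(t \cdot n) = O(n\sqrt{n})$ comparisons in Phase~A, matching the claimed bound.

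The main obstacle is justifying the sharper estimate $\log|\Gamma_\delta| = O(n)$: a crude count of the $O(n^2)$ defining hyperplanes in $\mathbb{R}^{2n}$ yields only $2^{O(n\log n)}$ regions, which would inflate Phase~A to $O(n^{1.5}\log n)$. Shaving the logarithm requires exploiting the fact that each hyperplane $x_i - x_{i+\delta} = y_j - y_{j+\delta}$ involves only four of the $2n$ input coordinates; the resulting arrangement therefore decomposes into structured low-dimensional pieces, paralleling Fredman's original counting for the sorted orderings of an $X+Y$ matrix. This combinatorial estimate is the heart of the proof, and with it in hand the rest of the argument is bookkeeping.
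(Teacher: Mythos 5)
Your high-level strategy --- block decomposition, use Fredman's theorem to pre-decide all intra-window comparisons via sorting short-range differences, then aggregate $O(\sqrt n)$ block minima per output --- is the same one the paper uses, and Phase~B is accounted for correctly. The gap is exactly where you flag it, but your proposed repair cannot work: the claim $\lg|\Gamma_\delta| = O(n)$ is false. For fixed $\delta$, the map $\vec x \mapsto (x_0 - x_\delta,\, x_1 - x_{1+\delta}, \dots)$ is surjective onto $\R^{n-\delta}$ (the constraint graph with edges $\{i,i+\delta\}$ is a disjoint union of $\delta$ paths, so the $n-\delta$ differences are linearly independent), hence \emph{every} one of the $(n-\delta)!$ orderings of the $x$-differences alone is realizable, giving $\lg|\Gamma_\delta| = \Theta(n \lg n)$. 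No amount of exploiting the four-variable structure of each hyperplane will shave that, and your Phase~A stalls at $\Theta(n^{1.5}\lg n)$.

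The fix, which is what the paper does, is to \emph{pool} rather than to sort each $S_\delta$ separately: sort the single set $D = \bigcup_{\delta \le d} S_\delta = \{x_i - x_j,\ y_i - y_j : |i-j| \le d\}$ of size $N = O(nd)$ in one application of Theorem~\ref{Fredman lemma}. The point is that the two terms in Fredman's bound $O(N + \lg|\Gamma|)$ scale very differently as you enlarge the set while keeping the ambient dimension $2n$ fixed: the first term grows linearly in $N$, but $\lg|\Gamma|$ is bounded by the log of the number of cells of $O(N^2)$ hyperplanes in $\R^{2n}$, which is $O(n \lg N)$ and grows only logarithmically in $N$. So one sort of the pooled set costs $O(nd + n\lg(nd)) = O(nd + n\lg n)$, which with $d = \sqrt n$ is $O(n\sqrt n)$ --- whereas sorting the $\sqrt n$ pieces individually forces you to pay the $\Theta(n\lg n)$ entropy term $\sqrt n$ times. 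Once $D$ is sorted, every comparison $x_i - y_{k-i}$ vs.\ $x_j - y_{k-j}$ with $|i-j| \le d$ is decided for free, and the rest of your argument (extracting the $M_k(\lambda)$'s and the $O(n^2/d)$-comparison aggregation) goes through verbatim.
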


\begin{proof}
  Let $\vec x$ and $\vec y$ denote the two vectors of length~$n$,
  and let $\vec x \minconvolve \vec y$ denote their $(\min,-)$ convolution,
  whose $k$th entry is $\min_{i=0}^k \left( x_i - y_{k-i} \right)$.

  First we sort the set $D = \{x_i - x_j, y_i - y_j : |i-j| \leq d\}$
  of pairwise differences between nearby $x_i$'s and nearby $y_i$'s,
  where $d \leq n$ is a value to be determined later.
  This set $D$ has $N = O(n d)$ elements.
  The possible sorted orders of $D$ correspond to cells in the arrangement of
  hyperplanes in $\R^{2 n}$ induced by all $N \choose 2$ possible comparisons
  between elements in the set,
  and this hyperplane arrangement has $O(N^{4 n})$ cells.
  By Theorem~\ref{Fredman lemma}, there is a comparison tree sorting $D$
  of depth $O(N + n \lg N) = O(n d + n \lg n)$.

  The comparisons we make to sort $D$ enable us to compare $x_i - y_{k-i}$
  versus $x_j - y_{k-j}$ for free, provided $|i - j| \leq d$, because
  $x_i - y_{k-i} < x_j - y_{k-j}$ precisely if
  $x_i - x_j < y_{k-i} - y_{k-j}$.
  Thus, in particular, we can compute
  $$M_k(\lambda) = \min \Big\{x_i - y_{k-i} ~\Big|~
                 i = \lambda, \lambda+1, \dots, \min\{\lambda+d,n\}-1\Big\}$$
  for free (using the outcomes of the comparisons we have already made).

  We can rewrite the $k$th entry
  $\min_{i=0}^k ( x_i - y_{k-i} )$ of $\vec x \minconvolve \vec y$
  as $\min\{M_k(0), \allowbreak M_k(d), \allowbreak M_k(2 d), \dots, M_k(\lceil k/d \rceil d)\}$,
  and thus we can compute it in $O(k/d) = O(n/d)$ comparisons
  between differences.
  Therefore all $n$ entries can be computed in $O(n d + n \lg n + n^2/d)$
  total time.

  This asymptotic running time is minimized when $n d = \Theta(n^2/d)$,
  i.e., when $d^2 = \Theta(n)$.
  Substituting $d = \sqrt{n}$, we obtain a running time of
  $O(n \sqrt n)$ in the nonuniform linear decision tree model.
\end{proof}

Combining Theorems~\ref{l_infty reduction} and \ref{min convolution nonuniform},
we obtain the following result:

\begin{corollary} \label{l_infty nonuniform}
  The $\ell_\infty$ necklace alignment problem can be solved in
  $O(n \sqrt n)$ time in the nonuniform linear decision tree model.
\end{corollary}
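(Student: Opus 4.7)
The plan is to combine the two preceding results directly. By Theorem~\ref{l_infty reduction}, the $\ell_\infty$ necklace alignment problem reduces, in $O(n)$ time and $O(n)$ comparisons, to one $(\min,-)$ convolution and one $(\max,-)$ convolution on vectors of length $2n$. This reduction only manipulates the input values by reindexing, padding with $\pm\infty$, and performing an $O(n)$ brute-force minimization over~$s$ after recovering the quantities $\min_i(x_i-y_{(i+s)\bmod n})$ and $\max_i(x_i-y_{(i+s)\bmod n})$ via Fact~\ref{min max fact}. All of these operations are either bookkeeping or additions/comparisons on linear functions of the input, so the reduction is valid in the nonuniform linear decision tree model with no extra overhead.

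Next I would apply Theorem~\ref{min convolution nonuniform} to bound the cost of the two convolutions. That theorem gives an $O(n\sqrt{n})$-depth linear decision tree for $(\min,-)$ convolution, and by the obvious symmetry (negate the roles of $\min$ and $\max$, or equivalently negate $\vec x$ and $\vec y$ and reuse the same algorithm), the same bound holds for $(\max,-)$ convolution. The padded vectors in Theorem~\ref{l_infty reduction} have length $2n$, so each convolution still takes $O((2n)\sqrt{2n}) = O(n\sqrt{n})$ time.

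Adding up the pieces gives $O(n) + O(n\sqrt{n}) + O(n\sqrt{n}) = O(n\sqrt{n})$, establishing the corollary. There is no real obstacle here; the only thing worth double-checking is that the reduction of Theorem~\ref{l_infty reduction} does not secretly require computations outside the linear decision tree model, but inspection of its proof shows that every step is either a $\pm\infty$ padding, an index permutation, a comparison between linear forms in the input, or the single affine expression $c = -\tfrac{1}{2}(\min + \max)$ from Fact~\ref{min max fact}, all of which are admissible.
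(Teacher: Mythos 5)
Your proposal is correct and follows exactly the paper's route: the corollary is obtained by combining Theorem~\ref{l_infty reduction} (the $O(n)$ reduction to one $(\min,-)$ and one $(\max,-)$ convolution) with Theorem~\ref{min convolution nonuniform} (the $O(n\sqrt n)$ decision-tree bound for $(\min,-)$ convolution, with $(\max,-)$ handled by symmetry). The extra care you take in checking that the reduction stays within the nonuniform linear decision tree model is sound but not a new idea; it is the same one-line combination the paper uses.
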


\subsection{$(\min,-)$ Convolution in Real RAM via Geometric Dominance}

Our algorithm 
on the real RAM uses the following geometric lemma from
Chan's work on all-pairs shortest paths:

\begin{lemma} {\rm \cite[Lemma~2.1]{Chan-2005-apsp}} \label{chan}
  Given $n$ points $p_1, p_2, \dots, p_n$ in $d$ dimensions, each colored
  either red or blue, we can find the $P$ pairs $(p_i, p_j)$ for which
  $p_i$ is red, $p_j$ is blue, and $p_i$ dominates~$p_j$ (i.e., for all~$k$,
  the $k$th coordinate of $p_i$ is at least the $k$th coordinate of $p_j$), in
  $2^{O(d)} n^{1+\epsilon} + O(P)$ time for arbitrarily small $\epsilon > 0$.
\end{lemma}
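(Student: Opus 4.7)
The plan is a dimension-reducing divide-and-conquer that cuts either the point count or the dimension at every level. First I would sort the $n$ points by their first coordinate and split at the median, obtaining an upper half $U$ and a lower half $L$ of roughly equal size. Every dominating red--blue pair $(p_i,p_j)$ falls into one of three useful classes: (i) both endpoints lie in $U$, (ii) both lie in $L$, or (iii) $p_i \in U$ and $p_j \in L$. The fourth combination, $p_i \in L$ and $p_j \in U$, is automatically impossible in the first coordinate. Classes (i) and (ii) are handled by two recursive calls in dimension $d$ on half as many points; class (iii) is special, because the first coordinate already satisfies the dominance inequality, so it reduces to a bipartite dominance reporting problem in $d-1$ dimensions between the red points of $U$ and the blue points of $L$.

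This gives a recurrence of the shape $T(n,d) = 2T(n/2,d) + T'(n,d-1) + O(P_{\text{local}})$, where $T'$ denotes the bipartite dominance variant, which itself satisfies an analogous recurrence (and for the base case $d=1$ can be solved in $O(n+P)$ time by a single sort-and-sweep). Unwinding the recursion in the standard way gives $O(n\log^{d-1} n + P)$ time, with the output term staying additive because each reported pair is charged to the unique recursion level at which its first coordinate is first separated.

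To upgrade this to the claimed $2^{O(d)} n^{1+\epsilon} + O(P)$ bound, I would use the $\epsilon$-slack to cap the $\log^{d-1} n$ factor. Concretely, once $n$ drops below a threshold $n_0 = 2^{\Theta(d/\epsilon)}$, switch to the brute-force $O(d\, n_0^2)$ algorithm at the leaves; summed over all leaves this contributes $2^{O(d)} n$. Above the threshold the recursion has depth only $O(d/\epsilon)$, and an easy computation shows $\log^{d-1} n \le 2^{O(d)} n^{\epsilon}$ in that regime, so the non-reporting work is $2^{O(d)} n^{1+\epsilon}$.

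The main obstacle I expect is bookkeeping to keep the $+O(P)$ term genuinely output-sensitive: the same dominating pair must not be rediscovered in several branches of the recursion, which forces one to commit each pair to the unique level at which the first differing coordinate is split and to show that the aggregated local outputs telescope to $O(P)$. A secondary difficulty is controlling the constants in $2^{O(d)}$ so that the auxiliary structures (sorted coordinate lists, median computations, and the base-case brute force) each contribute only a single exponential in $d$ rather than a factorial.
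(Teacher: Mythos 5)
The paper cites this result from Chan~\cite{Chan-2005-apsp} without reproving it, so there is no in-paper proof to compare against; your proposal is a correct reconstruction of Chan's own argument rather than an alternative route. The divide-and-conquer you describe (split at the median first coordinate, reduce the red-above/blue-below cross pairs to a $(d-1)$-dimensional subproblem, recurse on the two halves, and truncate below a size threshold of the form $2^{\Theta(d)}$ to trade the $\log^{d-1}n$ overhead for $n^{\epsilon}$) is exactly the scheme used there. One imprecision to fix: above the cutoff $n_0=2^{\Theta(d/\epsilon)}$ the recursion does \emph{not} have depth $O(d/\epsilon)$; it can still halve $\log(n/n_0)$ more times and lose up to $d$ more dimensions. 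What actually rescues the bound is the inequality you state afterward, $\log^{d-1}n\le 2^{O(d)}n^{\epsilon}$, which holds for all $n\ge n_0$ once $n_0$ is taken of order $2^{\Theta((d/\epsilon)\log(d/\epsilon))}$ (still $2^{O(d)}$ for fixed~$\epsilon$); combining this with a count of subproblems at each (size, dimension) state ($\binom{\ell+j}{j}2^{\ell}$ of them after $\ell$ halvings and $j$ dimension drops) gives the stated $2^{O(d)}n^{1+\epsilon}+O(P)$. Your worry about double-reporting is unfounded: at every split each red--blue pair is routed into exactly one child subproblem (with ties in the split coordinate broken consistently), so each dominating pair is output exactly once and the $O(P)$ term is genuinely additive.
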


\begin{theorem} \label{min convolution RAM}
  The $(\min,-)$ convolution of two vectors of length $n$
  can be computed in $O(n^2/\lg n)$ time on a real RAM.
\end{theorem}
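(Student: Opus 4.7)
The plan is to partition the index set $\{0,1,\dots,n-1\}$ into $n/d$ blocks of size $d$, where $d = \delta \lg n$ for a sufficiently small constant $\delta$, and reduce the computation of each entry $z_k = \min_i(x_i - y_{k-i})$ to the combination of block-wise minima $M_k(\lambda) = \min_{i \in B_\lambda}(x_i - y_{k-i})$ (treating out-of-range $y$-indices as $+\infty$). There are $n \cdot (n/d) = n^2/d$ such block-query pairs, and combining them into the outputs $z_k = \min_\lambda M_k(\lambda)$ costs $O(n/d)$ per $k$, totaling $O(n^2/d) = O(n^2/\lg n)$. The entire cost of the algorithm will therefore come down to computing all $M_k(\lambda)$ within the same budget.

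The key step is to encode the argmin problem within a block as a geometric dominance problem so that Lemma~\ref{chan} can be invoked. Writing $i = \lambda d + j^*$ for some $j^* \in \{0,\dots,d-1\}$, index $i$ is the argmin in block $B_\lambda$ for output $k$ iff for every $j \neq j^*$ in the block,
\[
 x_{\lambda d + j^*} - x_{\lambda d + j} \;\le\; y_{k-\lambda d - j^*} - y_{k-\lambda d - j}.
\]
Thus for each fixed $j^*$ I would form red points $\rho_\lambda^{(j^*)} \in \mathbb{R}^{d-1}$, one per block, with coordinates $x_{\lambda d + j^*} - x_{\lambda d + j}$, and blue points $\beta_\mu^{(j^*)} \in \mathbb{R}^{d-1}$, one per window offset $\mu = k - \lambda d$, with coordinates $y_{\mu - j^*} - y_{\mu - j}$. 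Then $j^*$ is the argmin for block $\lambda$ and output $k = \lambda d + \mu$ precisely when $\beta_\mu^{(j^*)}$ dominates $\rho_\lambda^{(j^*)}$ coordinatewise. Applying Lemma~\ref{chan} once per $j^*$ (there are $d$ choices) recovers, for every block-query pair, at least one argmin candidate, after which $M_k(\lambda)$ is just the $O(1)$ evaluation $x_{\lambda d + j^*} - y_{k-\lambda d - j^*}$.

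For the running-time accounting, the overhead of the $d$ dominance calls is $d \cdot 2^{O(d)} n^{1+\epsilon}$, which with $d = \delta \lg n$ becomes $O(\lg n \cdot n^{1+O(\delta)+\epsilon})$; for small enough $\delta$ and $\epsilon$ this is $o(n^2/\lg n)$. After a symbolic perturbation that breaks ties so that each (block, window) pair has a unique argmin, the total number of reported dominance pairs summed over $j^*$ is exactly the number of (block, window) pairs, i.e., $O(n^2/d) = O(n^2/\lg n)$, which matches the combining step. The hardest part of the proof will be carrying out this tie-breaking cleanly and verifying that the sum $\sum_{j^*} P_{j^*}$ really is $O(n^2/d)$ rather than the pessimistic $O(n^2)$, since the entire savings of a $\lg n$ factor over the trivial bound depends on this accounting. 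The boundary handling (padding with $+\infty$ so that terms outside the valid convolution range do not spuriously win) is routine.
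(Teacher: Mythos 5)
Your proposal matches the paper's proof essentially step for step: block size $d=\Theta(\lg n)$, one invocation of Lemma~\ref{chan} per in-block offset $j^*$ (the paper's $\delta$) on difference-vectors $x_{\lambda d+j^*}-x_{\lambda d+j}$ and $y_{\mu-j^*}-y_{\mu-j}$, the $O(n^2/d)$ bound on total dominating pairs via tie-breaking since each (block, window) pair has a unique argmin, and the final $O(n/d)$-per-output merge. The only cosmetic differences are that the paper works with $(\max,-)$ and $d$-dimensional points (retaining the trivial $j=j^*$ coordinate) whereas you work with $(\min,-)$ and $(d-1)$-dimensional points, neither of which changes the argument.
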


\begin{proof}
  Let $\vec x$ and $\vec y$ denote the two vectors of length~$n$,
  and let $\vec x \maxconvolve \vec y$ denote their $(\max,-)$ convolution.
  (Symmetrically, we can compute the $(\min,-)$ convolution.)
  for each $i \in \{0, d, 2 d, \dots, \lfloor n/d \rfloor d\}$,
  and for each $j \in \{0, 1, \dots, n-1\}$,
  we define the $d$-dimensional points
  $$\arraycolsep=0.5\arraycolsep
  \begin{array}{rcllll}
    p_{\delta,i} &=& (x_{i+\delta} - x_i, & x_{i+\delta} - x_{i+1}, 
& \dots, & x_{i+\delta} - x_{i+d-1}), \\
    q_{\delta,j} &=& (y_{j-\delta} - y_i, & y_{j-\delta} - y_{i-1}, 
& \dots, &  y_{j-\delta} - y_{j-d-1}).
  \end{array}$$
  (To handle boundary cases, define $x_i=\infty$ and $y_j=-\infty$
  for indices $i,j$ outside $[0,n-1]$.)
  For each $\delta \in \{0,1,\dots, d-1\}$,
  we apply Lemma~\ref{chan} to the set of red
  points $\{p_{\delta,i} : i = 0, d, 2 d, \dots, \lfloor n/d \rfloor d\}$ and
  the set of blue points $\{q_{\delta,j} : j = 0, 1, \dots, n-1\}$,
  to obtain all dominating pairs $(p_{\delta,i},q_{\delta,j})$.

  Point $p_{\delta,i}$ dominates $q_{\delta,j}$ precisely if
  $x_{i+\delta} - x_{i+\delta'} \geq y_{j-\delta} - y_{j-\delta'}$
  for all $\delta' \in \{0, 1, \dots, d-1\}$
  (ignoring the indices outside $[0,n-1]$).
  By re-arranging terms, this condition is equivalent to
  $x_{i+\delta} - y_{j-\delta} \geq x_{i+\delta'} - y_{j-\delta'}$
  for all $\delta' \in \{0, 1, \dots, d-1\}$.
  If we substitute $j = k-i$, we obtain that 
  $(p_{\delta,i}, q_{\delta,k-i})$ is a dominating
  pair precisely if 
  $x_{i+\delta} - y_{k-i-\delta} = \max_{\delta'=1}^{d-1} (x_{i+\delta'} - y_{k-i-\delta'})$.
  Thus, the set of dominating pairs gives us the maximum 
  $M_k(i)=\max\{x_i - y_{k-i}, x_{i+1} - y_{k-i+1}, \dots,
           x_{\min\{i+d,n\}-1} - y_{\min\{k-i+d,n\}-1}\}$ 
  for each $i$ divisible by $d$ and for each~$k$.
  Also, there can be at most $O(n^2/d)$ such pairs for all $i,j,\delta$, 
  because there are $O(n/d)$ choices for $i$~and
  $O(n)$ choices for~$j$, and
  if $(p_{\delta,i},q_{\delta,j})$
  is a dominating pair, then $(p_{\delta',i},q_{\delta',j})$ cannot be a
  dominating pair for any $\delta'\neq \delta$.
  (Here we assume that the $\max$ is achieved uniquely, which can be arranged
   by standard perturbation techniques or by breaking ties consistently
   \cite{Chan-2005-apsp}.)
  Hence, the running time of the $d$ executions of Lemma~\ref{chan} is
  $d 2^{O(d)} n^{1+\epsilon} + O(n^2/d)$ time,
  which is $O(n^2/\lg n)$ if we choose $d = \alpha \lg n$
  for a sufficiently small constant $\alpha > 0$.
  We can rewrite the $k$th entry
  $\max_{i=0}^k ( x_i - y_{k-i} )$ of $\vec x \maxconvolve \vec y$ as
  $\max\{M_k(0), M_k(d), M_k(2 d), \allowbreak \dots, M_k(\lceil k/d \rceil d)\}
$,
  and thus we can compute it in $O(k/d) = O(n/d)$ time.
  Therefore all $n$ entries can be computed in $O(n^2/d) = O(n^2/\lg n)$ time
  on a real RAM.
\end{proof}

Combining Theorems~\ref{l_infty reduction} and \ref{min convolution RAM},
we obtain the following result:

\begin{corollary} \label{l_infty RAM}
  The $\ell_\infty$ necklace alignment problem can be solved
  in $O(n^2/\lg n)$ time on a real RAM.
\end{corollary}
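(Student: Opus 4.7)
The plan is essentially mechanical: the corollary is the composition of the two preceding results in this section. First I would invoke Theorem~\ref{l_infty reduction} to reduce, in $O(n)$ time, the $\ell_\infty$ necklace alignment instance to one $(\min,-)$ convolution and one $(\max,-)$ convolution on the padded/doubled vectors $\vec x', \vec x'', \vec y'$ of length $2n$ constructed there. By Fact~\ref{min max fact}, the antidiagonal minima and maxima returned by these convolutions let us read off, for each shift $s \in \{0,1,\dots,n-1\}$, the optimal offset $c(s)$ and the associated $\ell_\infty$ cost in $O(1)$ time per shift; a final linear scan over $s$ produces the overall optimum in $O(n)$ additional time.

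Next I would apply Theorem~\ref{min convolution RAM} to each of the two convolution instances. Since the theorem gives $(\min,-)$ (and symmetrically $(\max,-)$) convolution on length-$N$ vectors in $O(N^2/\lg N)$ time on the real RAM, running it with $N = 2n$ yields $O((2n)^2/\lg(2n)) = O(n^2/\lg n)$ per convolution. Summing the $O(n)$ cost of the reduction, the $O(n^2/\lg n)$ cost of the two convolutions, and the $O(n)$ cost of the final scan over $s$, the total running time is dominated by the convolution step and equals $O(n^2/\lg n)$.

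Since every ingredient is already proved, there is no real obstacle here; the only thing to double-check is that the $O(n)$-time reduction and the length doubling in $\vec x', \vec y'$ do not inflate the asymptotic bound, which they do not. Thus the corollary follows immediately from Theorems~\ref{l_infty reduction} and~\ref{min convolution RAM}.
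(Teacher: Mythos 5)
Your proposal matches the paper's argument exactly: the corollary is obtained by combining Theorem~\ref{l_infty reduction} (the $O(n)$-time reduction to one $(\min,-)$ and one $(\max,-)$ convolution) with Theorem~\ref{min convolution RAM} (the $O(n^2/\lg n)$-time real-RAM convolution algorithm). Your additional bookkeeping about the length-$2n$ padding not affecting the asymptotics is correct and harmless.
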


Although we will present a slightly faster algorithm for $(\min,-)$ convolution 
in the next subsection, the approach described above will be useful
later when we discuss the $(\median,-)$ convolution problem.

\subsection{$(\min,-)$ Convolution 
via Matrix Multiplication}

Our next algorithm 
uses Chan's $O(n^3 (\lg \lg n)^3 / \lg^2 n)$
algorithm for computing the $(\min,+)$ matrix multiplication of two
$n \times n$ matrices \cite{Chan-2010-apsp}
(to which all-pairs shortest paths also reduces).
We establish a reduction from convolution to matrix multiplication.

\begin{theorem}\label{reduction}
  If we can compute the $(\min,-)$ matrix multiplication of two
  $n \times n$ matrices in $T(n)$ time, then we can compute the
  $(\min,-)$ convolution of two vectors of length $n$
  in $O((n + T(\sqrt n)) \sqrt n)$ time.
\end{theorem}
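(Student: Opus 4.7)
The plan is a block decomposition that reduces $(\min,-)$ convolution to a small number of $(\min,-)$ matrix multiplications on $\sqrt n\times\sqrt n$ matrices. Let $N=\lceil\sqrt n\,\rceil$ and partition the indices $\{0,\dots,n-1\}$ into $N$ contiguous blocks of length $N$, padding $\vec x$ with $+\infty$ and $\vec y$ with $-\infty$ at the tail so that padded positions can never attain the minimum in the final product. Reshape $\vec x$ into an $N\times N$ matrix $A$ with $A_{b,i'}=x_{bN+i'}$. For each offset $r\in\{0,1,\dots,2N-2\}$, form the $N\times N$ matrix $B^{(r)}$ with
$$B^{(r)}_{i',c}\;=\;\begin{cases}y_{cN+(r-i')} & \text{if } 0\le r-i'\le N-1,\\ -\infty & \text{otherwise.}\end{cases}$$
Applying $(\min,-)$ matrix multiplication to $A$ and $B^{(r)}$ produces a matrix $C^{(r)}$ with
$$C^{(r)}_{b,c}\;=\;\min_{i'=0}^{N-1}\bigl(A_{b,i'}-B^{(r)}_{i',c}\bigr)\;=\;\min_{\substack{i'+j'=r\\0\le i',j'<N}}\bigl(x_{bN+i'}-y_{cN+j'}\bigr),$$
which is exactly the $r$th entry of the $(\min,-)$ convolution of the $b$th block of $\vec x$ with the $c$th block of $\vec y$.

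To recover the output $z_k$ of the full convolution, I would write $k=sN+r$ with $s=b+c$ and $r=i'+j'\in[0,2N-2]$; at most two such decompositions are feasible, one with $r<N$ and one with $r\ge N$. Each contributes the antidiagonal minimum $\min_{b+c=s}C^{(r)}_{b,c}$, and $z_k$ is the smaller of these (at most) two values. I would precompute all antidiagonal minima of each $C^{(r)}$ in $O(N^2)$ time per $r$, then read off each $z_k$ in $O(1)$. The total cost is $O(N\cdot T(N))$ for the $2N-1$ matrix multiplications plus $O(N^3)$ for the antidiagonal scans, which is $O\bigl((n+T(\sqrt n))\sqrt n\bigr)$ as claimed.

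The main obstacle, and essentially the only non-routine step, is designing $B^{(r)}$ so that a single $(\min,-)$ matrix product simultaneously produces the $r$th block-convolution entry for \emph{every} pair of blocks $(b,c)$. The crucial choices are (i) pinning $i'+j'=r$ and sweeping $r$ over $[0,2N-2]$, so that every relevant sum $i'+j'$ is covered by exactly one matrix product, and (ii) padding with $-\infty$ on the side of $B^{(r)}$ that appears with a minus sign, so that invalid positions contribute $+\infty$ to the $(\min,-)$ product and are harmlessly ignored, versus $+\infty$ on the side of $A$. Once this framework is in place, the antidiagonal-minimum step and the two-case union for each $k$ are standard bookkeeping.
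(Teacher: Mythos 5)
Your proof is correct and achieves the stated bound, via a decomposition that is close in spirit to the paper's but not identical. The paper forms a \emph{single} rectangular $(\min,-)$ product: the left factor is the $\sqrt n\times\sqrt n$ matrix whose rows are the consecutive blocks of $\vec x$, and the right factor is a $\sqrt n\times(n-\sqrt n+1)$ sliding-window matrix whose column $j$ is the reversed window $(y_{j+\sqrt n-1},\dots,y_j)$. Its $(i,j)$-entry is then directly $\min_m\bigl(x_{i\sqrt n+m}-y_{j+\sqrt n-1-m}\bigr)$, all of whose terms have index-sum $i\sqrt n+j+\sqrt n-1$, so each $z_k$ is recovered by taking the minimum of $O(\sqrt n)$ such entries together with a short brute-force tail for the last partial block; the rectangular product is realized as at most $\sqrt n$ square $\sqrt n\times\sqrt n$ products by slicing columns. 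You instead block $\vec y$ along the same boundaries as $\vec x$ and move the within-block shift into a parameter $r\in\{0,\dots,2N-2\}$, yielding $2N-1$ square products $C^{(r)}$ whose antidiagonal minima you then aggregate, with at most two values of $r$ contributing to each $z_k$. Both routes run $O(\sqrt n)$ square multiplications and spend $O(n\sqrt n)$ on bookkeeping, so they are interchangeable: the paper's arrangement reads each $z_k$ straight off the product entries and avoids a separate antidiagonal-minimum pass, while yours has the cleaner invariant that every factor is a block-aligned reshape of the input. Your padding choices ($+\infty$ on $\vec x$, $-\infty$ on $\vec y$ and on the out-of-range positions of $B^{(r)}$) correctly force invalid terms to $+\infty$ in the $(\min,-)$ semiring, and the ``at most two $(s,r)$ decompositions of $k$'' claim is right since $r+2N$ would exceed $2N-2$.
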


\begin{proof}
  We claim that computing the $(\min,-)$ convolution
  $\vec z = \vec x \minconvolve \vec y$
  reduces to the following $(\min,-)$ matrix multiplication:
  $$
  \def\matrixA{
    \begin{matrix}
      x_0 & x_1 & \cdots & x_{\sqrt n-1} \\
      x_{\sqrt n} & x_{\sqrt n+1} & \cdots & x_{2 \sqrt n-1} \\
      \vdots & \vdots & \ddots & \vdots \\
      x_{n-\sqrt n} & x_{n-\sqrt n+1} & \cdots & x_{n-1}
    \end{matrix}
  }
  \def\matrixB{
    \begin{matrix}
      y_{\sqrt n-1} & y_{\sqrt n} & \cdots & y_{n-2} & y_{n-1} \\
      y_{\sqrt n-2} & y_{\sqrt n-1} & \cdots & y_{n-3} & y_{n-2} \\
      \vdots & \vdots & \ddots & \vdots & \vdots \\
      y_1 & y_2 & \cdots & y_{n-\sqrt n-2} & y_{n-\sqrt n-1} \\
      y_0 & y_1 & \cdots & y_{n-\sqrt n-1} & y_{n-\sqrt n}
    \end{matrix}
  }
  P = 
    {\scriptstyle \sqrt n}
    \lefteqn{\phantom{\Bigg\{ \Bigg(}
             \underbrace{ \phantom{\matrixA} }_{\sqrt n}}
    \left\{ \left( \matrixA \right) \right.
  \minmultiply
    \lefteqn{\phantom{\Bigg(} \underbrace{ \phantom{\matrixB} }_{n-\sqrt n+1}}
    \left. \left( \matrixB \right) \right\} {\scriptstyle \sqrt n}.
  $$
  The $(i,j)$th entry $p_{i,j}$ of this product $P$ is
  $$
  p_{i,j} =
    \min_{m=0}^{\sqrt n-1} \left( x_{i\sqrt n + m} - y_{j+\sqrt n-1-m} \right).
  $$
  Let $\bar k = \lfloor k/\sqrt n\rfloor \sqrt n$
  denote the next smaller multiple of $\sqrt n$ from~$k$.
  Now, given the product $P$ above, we can compute each element $z_k$
  of the convolution $\vec z$ as follows:
  $$
  z_k = \min \left\{ \begin{array}{c}
          p_{0,k+1-\sqrt n},
          p_{1,k+1-2\sqrt n},
          p_{2,k+1-3\sqrt n}, \dots,
          p_{\lfloor k/\sqrt n\rfloor-1, k-\lfloor k/\sqrt n\rfloor \sqrt n}, \\
          x_{\bar k} - y_{k-\bar k},
          x_{\bar k+1} - y_{k-\bar k-1}, \dots,
          x_k - y_0
        \end{array} \right\}.
  $$
  This $\min$ has $O(\sqrt n)$ 
  terms, and thus $z_k$ can be computed in
  $O(\sqrt n)$ time.  The entire vector $\vec z$ can therefore
  be computed in $O(n \sqrt n)$ time, given the matrix product~$P$.

  It remains to show how to compute the rectangular product $P$ efficiently,
  given an efficient square-matrix $(\min,-)$ multiplication algorithm.
  We simply break the product $P$ into at most $\sqrt n$ products of
  $\sqrt n \times \sqrt n$ matrices: the left term is the entire left matrix,
  and the right term is a block submatrix.  The number of blocks is
  $\lceil (n-\sqrt n+1)/\sqrt n \rceil \leq \sqrt n$.
  Thus the running time for the product is $O(T(\sqrt n) \sqrt n)$.

  Summing the reduction cost and the product cost, we obtain a total cost of
  $O((n+T(\sqrt n)) \sqrt n)$.
\end{proof}

Plugging in $T(n) = O(n^3/\lg n)$ from \cite{Chan-2005-apsp} 
allows us to obtain an alternative proof of Theorem~\ref{min convolution RAM}.
Plugging in $T(n) = O(n^3 (\lg \lg n)^3/\lg^2 n)$ from \cite{Chan-2010-apsp}
immediately gives us the following improved result:

\begin{corollary} \label{min convolution word RAM} \sloppy
  The $(\min,-)$ convolution of two vectors of length $n$
  can be computed in $O(n^2 (\lg \lg n)^3/\lg^2 n)$ time on a real RAM.
\end{corollary}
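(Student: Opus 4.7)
The plan is to obtain the corollary as an immediate consequence of Theorem~\ref{reduction} by plugging in the currently best known bound for $(\min,+)$ matrix multiplication. Specifically, by the symmetry between $(\min,+)$ and $(\min,-)$ (simply negate the second matrix), Chan's all-pairs-shortest-paths algorithm \cite{Chan-2010-apsp} gives $T(n) = O(n^3 (\lg \lg n)^3 / \lg^2 n)$ for $(\min,-)$ matrix multiplication of two $n \times n$ matrices.

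I would then substitute $n \mapsto \sqrt{n}$ to compute the cost of the inner matrix multiplications used in the reduction:
\[
  T(\sqrt n) \;=\; O\!\left( (\sqrt n)^3 \,\frac{(\lg \lg \sqrt n)^3}{\lg^2 \sqrt n} \right) \;=\; O\!\left( n^{3/2} \,\frac{(\lg \lg n)^3}{\lg^2 n} \right),
\]
absorbing the constant factors that come from $\lg\lg\sqrt n = \lg\lg n - O(1)$ and $\lg\sqrt n = \tfrac{1}{2}\lg n$. Multiplying by the extra $\sqrt n$ factor from Theorem~\ref{reduction} yields $T(\sqrt n) \cdot \sqrt n = O(n^2 (\lg\lg n)^3 / \lg^2 n)$, and the additive $n\cdot\sqrt n = O(n^{3/2})$ term from the same theorem is dominated by this quantity.

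There is no real obstacle here, since both the reduction (Theorem~\ref{reduction}) and the matrix multiplication bound (from \cite{Chan-2010-apsp}) are already in hand; the only thing to check is that the model assumptions line up. In particular, Chan's bound uses word-level tricks on auxiliary numbers under the standard $(\lg n)$-bit word assumption, while treating the actual input values only through comparisons/subtractions, so the overall algorithm remains valid on a real RAM with respect to the convolution inputs. This is exactly the point already emphasized in the introduction after item~3, so the corollary follows by direct substitution.
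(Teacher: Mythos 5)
Your proposal is correct and matches the paper's own argument exactly: the paper also obtains this corollary by plugging $T(n) = O(n^3 (\lg\lg n)^3/\lg^2 n)$ from \cite{Chan-2010-apsp} into Theorem~\ref{reduction} and simplifying. Your additional remark about the model assumptions is consistent with the discussion in the introduction and does not change the argument.
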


Combining Theorem~\ref{l_infty reduction} and
Corollary~\ref{min convolution word RAM}, we obtain the following result:

\begin{corollary} \label{l_infty word RAM}
  The $\ell_\infty$ necklace alignment problem can be solved
  in\\ $O(n^2 (\lg \lg n)^3/\lg^2 n)$ time on a real RAM.
\end{corollary}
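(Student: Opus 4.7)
The plan is a direct composition of Theorem~\ref{l_infty reduction} and Corollary~\ref{min convolution word RAM}. First I would invoke Theorem~\ref{l_infty reduction} to reduce the given $\ell_\infty$ necklace alignment instance on vectors $\vec x,\vec y$ of length $n$ to one $(\min,-)$ convolution and one $(\max,-)$ convolution on the padded length-$2n$ vectors $\vec x',\vec y'$ (and $\vec x'',\vec y'$) described there, paying only $O(n)$ extra time for the final optimization over the shift $s$ via Fact~\ref{min max fact}. The $(\max,-)$ convolution is symmetric to the $(\min,-)$ case: negating both input vectors and then negating the output turns a $(\max,-)$ convolution into a $(\min,-)$ convolution, so both fall under the same time bound.

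Next I would plug each of these two convolutions into Corollary~\ref{min convolution word RAM}. Each call takes $O\bigl((2n)^2 (\lg\lg 2n)^3/\lg^2 (2n)\bigr) = O\bigl(n^2 (\lg\lg n)^3/\lg^2 n\bigr)$ time on the real RAM; the constant factor from padding is absorbed into the asymptotics, and the $O(n)$ overhead from the reduction and from the brute-force minimization over $s$ is clearly subdominant. Summing the two convolution costs therefore yields the stated $O(n^2 (\lg\lg n)^3/\lg^2 n)$ bound.

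There is essentially no obstacle in this step: the substantive work has already been done in Theorem~\ref{l_infty reduction} (which reduces necklace alignment to the two convolutions) and in Corollary~\ref{min convolution word RAM} (which itself relies on Theorem~\ref{reduction} together with Chan's $(\min,+)$ matrix multiplication bound). The only point that warrants explicit mention is the sign-flip equivalence between $(\max,-)$ and $(\min,-)$ convolution, which is what lets us charge both convolutions to the same bound.
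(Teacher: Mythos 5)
Your proposal is correct and follows exactly the paper's own route: compose Theorem~\ref{l_infty reduction} with Corollary~\ref{min convolution word RAM}, handling the $(\max,-)$ convolution by the standard sign-flip reduction to $(\min,-)$. The paper states this composition in one sentence without spelling out the $(\max,-)$/$(\min,-)$ equivalence, so your version is just a slightly more explicit rendering of the same argument.
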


We remark that by the reduction in Theorem~\ref{reduction}, any nontrivial
lower bound for $(\min,-)$ convolution would imply a
lower bound for $(\min,-)$ matrix multiplication and 
the all-pairs shortest path problem.

\section{$\ell_1$ Necklace Alignment and $(\median,+)$ Convolution}
\label{sec:l_1}

\subsection{Reducing $\ell_1$ Necklace Alignment to $(\median,+)$ Convolution}

First we show the relation between $\ell_1$ necklace alignment
and $(\median,+)$ convolution.  We need the following basic fact~\cite{ardila-2008}:

\begin{fact} \label{median fact}
  For any vector $\vec z = \langle z_0, z_1, \dots, z_{n-1} \rangle$,
  $\displaystyle \sum_{i=0}^{n-1} |z_i + c|$ is minimized when
  $c = -\median_{i=0}^{n-1} z_i$.
\end{fact}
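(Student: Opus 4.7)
I would exploit the fact that $f(c) = \sum_{i=0}^{n-1} |z_i + c|$ is piecewise linear and convex in $c$, and give a short pairing argument (the natural alternative---computing one-sided derivatives of $f$ and locating where $0$ lies in the subdifferential---reaches the same conclusion via essentially the same combinatorics).

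First, sort the entries so that $z_0 \leq z_1 \leq \cdots \leq z_{n-1}$; this relabeling does not change the sum. Then pair the $i$th smallest entry with the $i$th largest: $(z_i, z_{n-1-i})$ for $i = 0, 1, \dots, \lfloor n/2 \rfloor - 1$. For each such pair and any real $c$, the reverse triangle inequality gives
$$|z_i + c| + |z_{n-1-i} + c| \;\geq\; \bigl|(z_{n-1-i} + c) - (z_i + c)\bigr| \;=\; z_{n-1-i} - z_i,$$
with equality exactly when $-c \in [z_i, z_{n-1-i}]$. Summing the $\lfloor n/2\rfloor$ inequalities, and if $n$ is odd also keeping the unpaired middle term $|z_{(n-1)/2} + c|$, yields a lower bound on $f(c)$ whose only $c$-dependent piece is that middle residual.

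Finally, I would observe that the pair-intervals are nested: $[z_i, z_{n-1-i}] \supseteq [z_{i+1}, z_{n-2-i}]$, so their intersection is the median interval $[z_{\lfloor (n-1)/2 \rfloor}, z_{\lceil n/2 \rceil}]$. For odd $n$ this collapses to the single point $\{z_{(n-1)/2}\}$, and the residual $|z_{(n-1)/2} + c|$ is also minimized there, so $c^* = -z_{(n-1)/2} = -\median_i z_i$ simultaneously makes every pair-inequality tight and kills the residual, attaining the bound. For even $n$, every $c \in [-z_{n/2}, -z_{n/2-1}]$ attains the bound, which is consistent with any of the standard conventions for the median (midpoint or either middle order statistic). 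The only mild obstacle is bookkeeping the parity split cleanly; once that is handled, the whole argument is just the triangle inequality and the nesting of intervals.
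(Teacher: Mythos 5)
The paper does not prove Fact~\ref{median fact}; it simply cites Ardila et al.\ for it. Your pairing argument via the reverse triangle inequality is a correct, self-contained, and entirely standard proof of this well-known characterization of the median, so it is a valid substitute. One small bookkeeping slip: for odd $n=2m+1$, the innermost pair is $(z_{m-1},z_{m+1})$, so the intersection of the pair-intervals is $[z_{m-1},z_{m+1}]$, not $[z_{\lfloor(n-1)/2\rfloor},z_{\lceil n/2\rceil}]=[z_m,z_{m+1}]$, and it certainly does not ``collapse to the single point $\{z_{(n-1)/2}\}$''. The correct statement is that the pair bounds are all tight for every $-c\in[z_{m-1},z_{m+1}]$, and it is the unpaired residual $|z_m+c|$ that singles out $c=-z_m$ as the unique minimizer within that window; your final conclusion is unaffected. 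For even $n$ the intersection is indeed $[z_{n/2-1},z_{n/2}]$ and your remark about any median convention working there is correct.
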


Instead of using $(\median,+)$ convolution directly,
we use the equivalent form, $(\median,-)$ convolution:

\begin{theorem} \label{l_1 reduction}
  The $\ell_1$ necklace alignment problem can be reduced in $O(n)$ time
  to one $(\median,-)$ convolution.
\end{theorem}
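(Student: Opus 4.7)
The plan is to mirror the proof of Theorem~\ref{l_infty reduction}, with Fact~\ref{median fact} playing the role of Fact~\ref{min max fact}. After passing to the linear-alignment reduction from Section~2, the objective takes the form $\sum_{i=0}^{n-1} |x_i + c - y_{(i+s)\bmod n}|$. For each fixed shift $s$, Fact~\ref{median fact} identifies the optimal offset as $c = -m_s$, where $m_s = \median_{i=0}^{n-1}\bigl(x_i - y_{(i+s)\bmod n}\bigr)$ is the median of the $s$-th antidiagonal of the $X-Y$ matrix, and the resulting objective value is $\sum_{i=0}^{n-1} \bigl|x_i - y_{(i+s)\bmod n} - m_s\bigr|$.

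Next, I would feed a single $(\median,-)$ convolution the two vectors
\[
\vec x = \langle x_0, x_1, \dots, x_{n-1} \rangle, \qquad \vec y' = \langle y_{n-1}, y_{n-2}, \dots, y_0, y_{n-1}, y_{n-2}, \dots, y_0 \rangle,
\]
using the same index shift as in Theorem~\ref{l_infty reduction}: the $(n+s')$-th entry of $\vec x \medianconvolve \vec y'$ is $\median_{i=0}^{n-1}(x_i - y_{(i-s'-1)\bmod n})$, which recovers $m_s$ under the substitution $s' = n-1-s$. Because $|\vec x|=n$ and $|\vec y'|=2n$, every entry of interest is a median taken over exactly $n$ real values; no analogue of the $\pm\infty$ padding from the min/max case is needed, and indeed such padding would be illegitimate for the median since it would shift the rank of the true entries.

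The main obstacle will be that, unlike min and max, the median alone does not let us read off the objective $\sum_i |x_i - y_{(i+s)\bmod n} - m_s|$ in $O(1)$ time; there are still $n$ absolute deviations to sum. I would deal with this by noting that any $(\median,-)$ convolution routine must already identify the rank of the median within each antidiagonal, and so can be arranged to emit the sum of absolute deviations from the median at no asymptotic overhead beyond computing the medians themselves. A final linear sweep over $s \in \{0,1,\dots,n-1\}$ then selects the minimizing shift, completing the claimed $O(n)$-time reduction on top of the single $(\median,-)$ convolution call.
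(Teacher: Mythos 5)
Your high-level plan is correct and tracks the paper's, but there is a genuine gap in the vector construction. You feed the convolution a length-$n$ vector $\vec x$ and a length-$2n$ vector $\vec y'$, and argue that no padding is needed because each entry of interest is then a median over exactly $n$ real values. But the $(\median,-)$ convolution, as defined and as consumed by the paper's algorithms, takes two vectors of the \emph{same} length; and you explicitly dismiss padding as ``illegitimate for the median since it would shift the rank of the true entries.'' That dismissal overlooks the trick the paper actually uses: pad with \emph{balanced} values ($\infty$ and $-\infty$ alternating), which leaves the median unchanged, and \emph{double} every real entry so that the number of padding values contributing to each relevant index is always even. Concretely, the paper sets $\vec x' = \langle x_0, x_0, \dots, x_{n-1}, x_{n-1}; \infty, -\infty, \dots, \infty, -\infty\rangle$ and $\vec y' = \langle y_{n-1}, y_{n-1}, \dots, y_0, y_0; y_{n-1}, y_{n-1}, \dots, y_0, y_0\rangle$, both of length $4n$, and reads off entry $2(n+s')+1$: it contains $2n$ doubled real differences plus $2s'+2$ padding terms split evenly between $+\infty$ and $-\infty$, so the median equals $\median_{i=0}^{n-1}(x_i - y_{(i+s)\bmod n})$. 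Without the doubling, the padding count $s'+1$ has the wrong parity for half the shifts, and without balanced padding the median is biased; your construction, applied literally to the paper's definition, would require one or the other and your stated reason for rejecting padding is incorrect.

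Your handling of the ``main obstacle'' — that the median alone does not give $\sum_i |x_i - y_{(i+s)\bmod n} - m_s|$ — is in the right spirit and matches what the paper's $(\median,-)$ convolution algorithms in fact provide (they locate the median's position within each sorted block $L_k(\lambda)$, from which prefix sums give the deviation sums at no extra asymptotic cost). But stating it as a property that ``any $(\median,-)$ convolution routine must already'' possess is too strong; it is a property of the specific algorithms constructed in Sections~5.2 and 5.3, not of an arbitrary black box computing medians. The paper's own Theorem~\ref{l_1 reduction} is admittedly terse on this point, but you should phrase the step as relying on that additional output rather than on a universal property of median-convolution algorithms.
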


\begin{proof}
  For two necklaces $\vec x$ and $\vec y$, we apply the $(\median,-)$
  convolution
  to the following vectors,
  as in the proof of Theorem~\ref{l_infty reduction}:
  \begin{eqnarray*}
    \vec x' &=& \langle x_0, x_0, x_1, x_1, \dots, x_{n-1}, x_{n-1};
                        \underbrace{\infty, -\infty, \infty, -\infty, \dots,
                                    \infty, -\infty}_{2 n} \rangle, \\
    \vec y' &=& \langle y_{n-1}, y_{n-1}, y_{n-2}, y_{n-2}, \dots, y_0, y_0;
                        y_{n-1}, y_{n-1}, y_{n-2}, y_{n-2}, \dots, y_0, y_0
                \rangle.
  \end{eqnarray*}
  $\vec x' \medianconvolve \vec y'$ is
  $$ \median_{i=0}^{2(n+s')+1} (x'_i - y'_{2(n+s')+1-i})
   = \median_{i=0}^{n-1} (x_i - y_{(i-s'-1) \bmod n}),
  $$
  which is $\median_{i=0}^{n-1} (x_i - y_{(i+s) \bmod n})$
  if we let $s' = n-1-s$.
  Applying Fact~\ref{median fact}, we can therefore minimize
  $\median_{i=0}^{n-1} |x_i - y_{(i+s) \bmod n} + c|$ over~$c$,
  for each $s \in \{0, 1, \dots, n-1\}$.
  By brute force, we can minimize over $s$ as well using $O(n)$
  additional comparisons and time.
\end{proof}

Our results for $(\median,-)$ convolution use the following result
of Frederickson and Johnson:

\begin{theorem} {\rm \cite{Frederickson-Johnson-1982}} \label{median}
  The median element of the union of $k$ sorted lists, each of length~$n$,
  can be computed in $O(k \lg n)$ time and comparisons.
\end{theorem}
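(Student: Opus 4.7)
The plan is to prove this classical result about selecting from $k$ sorted lists by an iterative pruning strategy that halves the working window of each list over $O(\lg n)$ rounds, with $O(k)$ work per round.

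First I would reduce to an explicit selection target: we seek the element of rank $r = \lceil kn/2 \rceil$ in the union of the sorted lists $A_1, \ldots, A_k$ (each of length $n$). For each list I maintain a pair of pointers $(l_i, r_i)$ describing the current candidate window, initialised to $(1, n)$, together with a ``committed'' count $c_i$ of elements of $A_i$ that are known to lie before the sought rank (initially $0$). The invariant is that the global answer is still the element of some rank $r - \sum_i c_i$ in the union of the windows $A_i[l_i..r_i]$.

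In each round I compute the middle element $m_i = A_i[\lfloor (l_i+r_i)/2\rfloor]$ of every active window, assign it a weight $w_i = r_i - l_i + 1$, and compute the weighted median $M$ of the multiset $\{(m_i, w_i)\}$ in $O(k)$ time using a linear-time weighted median routine. Comparing each $m_i$ to $M$ partitions the lists into those whose median is at most $M$ and those whose median exceeds $M$. Depending on whether the residual rank $r - \sum_i c_i$ lies above or below the ``cumulative weight at $M$'', I prune: in each window I either discard the top half (by moving $r_i$) or the bottom half (by moving $l_i$ and incrementing $c_i$ by the size of the discarded prefix), with the side chosen so that the invariant is preserved. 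The key combinatorial fact, exactly as in weighted-median pivoting for standard linear-time selection, is that the pruning discards elements of total weight at least a constant fraction (roughly a quarter) of $\sum_i w_i$, so after $O(\lg n)$ rounds the aggregate window size is $O(k)$ and the remaining problem can be finished directly in $O(k)$ time by merging or brute-force selection.

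The main obstacle is verifying the constant-fraction shrinkage argument when individual list windows shrink at different rates and can become trivial at different times. Concretely, one must show that by always discarding half of every window whose median lies strictly on the ``safe'' side of $M$, at least a fixed fraction of the \emph{remaining} total weight is eliminated per round; this follows because weighted-median $M$ splits $\sum_i w_i$ into two halves of weight at least $\tfrac12 \sum_i w_i$ each, and within the half on the safe side we can discard the appropriate halves of the windows whose medians witness that half. Summing $O(k)$ work over $O(\lg n)$ rounds gives the claimed $O(k \lg n)$ bound in both time and comparisons, matching the statement cited from \cite{Frederickson-Johnson-1982}.
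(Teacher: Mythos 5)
The paper does not give a proof of this theorem: it is imported as a cited result from Frederickson and Johnson~\cite{Frederickson-Johnson-1982} and used as a black box in Theorems~\ref{median convolution nonuniform} and~\ref{median convolution RAM}. So there is no in-paper argument to compare against; the only question is whether your sketch stands on its own, and I do not think it does, because the rule you use to choose which side of the pivot to discard does not actually identify a safe side, and any fix for that destroys the $O(k)$-per-round budget on which your time bound relies.

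After computing the weighted median $M$ of the window medians, you compare the residual rank $r'$ to ``the cumulative weight at $M$,'' which from context is $\Sigma_{\le} = \sum_{i:\, m_i \le M} w_i$ (or perhaps $\Sigma_{\le}/2$). Neither quantity is the rank of $M$ in the union of the windows: the weighted-median property by itself only pins that rank to an interval of width up to $W/2$, where $W = \sum_i w_i$, so no threshold computable from the window medians and weights alone can always pick the safe side. A concrete failure for the reading $\Sigma_{\le}$: take $k = 3$ lists $(1,2,3,4)$, $(5,6,7,8)$, $(9,10,11,12)$ and seek the element of rank $7$, which is $7$. The window medians are $2, 6, 10$, so $M = 6$ and $\Sigma_{\le} = 8 > 7 = r'$; discarding the upper halves of the windows with $m_i \ge M$ removes $\{7,8\}$ and $\{11,12\}$ and throws away the answer. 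The correct decision compares $r'$ to the exact rank of $M$ in the union (here $6 < 7$, so one must prune from below), but computing that rank costs $\Theta(\sum_i \lg w_i)$ comparisons per round via binary search, which over $\Theta(\lg n)$ rounds yields $O(k \lg^2 n)$ rather than $O(k \lg n)$. Avoiding this rank computation while still guaranteeing a constant-factor shrink per round is exactly the non-trivial content of the Frederickson--Johnson algorithm, and your sketch does not supply that ingredient.
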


\subsection{$(\median,-)$ Convolution in Nonuniform Linear Decision Tree}

We begin with our results for the nonuniform linear decision tree model:

\begin{theorem} \label{median convolution nonuniform}
  The $(\median,-)$ convolution of two vectors of length $n$ can be computed
  in $O(n \sqrt {n \lg n})$ time in the nonuniform linear decision tree model.
\end{theorem}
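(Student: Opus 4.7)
The plan is to adapt the nonuniform linear decision tree algorithm for $(\min,-)$ convolution from Theorem~\ref{min convolution nonuniform}, replacing the per-block minimum computation by a Frederickson--Johnson median-of-sorted-lists computation (Theorem~\ref{median}). First I would sort the set $D = \{x_i - x_j,\ y_i - y_j : |i-j| \leq d\}$ of $O(nd)$ nearby differences via Fredman's Theorem~\ref{Fredman lemma}, costing $O(nd + n \lg n)$ comparisons, where $d$ is a block-size parameter to be chosen at the end.

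Once $D$ has been sorted, for each output index $k$ and each block $[\lambda d, (\lambda+1)d)$ of consecutive indices, the sorted order of the values $\{x_i - y_{k-i} : i \in \text{block}\}$ is already determined by the comparisons resolved on $D$, because $x_i - y_{k-i} < x_j - y_{k-j}$ is equivalent to $x_i - x_j < y_{k-i} - y_{k-j}$, and both differences lie in $D$ whenever $|i-j| \leq d$. Thus for each $k$ I obtain at most $\lceil n/d \rceil$ sorted lists, each of length at most $d$, whose union is the multiset $\{x_i - y_{k-i} : 0 \leq i \leq k\}$ whose median is the desired $k$th entry of $\vec x \medianconvolve \vec y$.

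Next, I would invoke Frederickson--Johnson (Theorem~\ref{median}) on these $\lceil n/d \rceil$ sorted lists to compute $z_k$ using $O((n/d) \lg d)$ comparisons. Crucially, Frederickson--Johnson may well compare values $x_i - y_{k-i}$ and $x_j - y_{k-j}$ whose index difference $|i - j|$ exceeds $d$, so these cannot be answered from the sorted $D$; however, each such comparison is still a sign test of a linear combination of the input numbers, hence a legal query in the nonuniform linear decision tree model, and we simply charge it. Summed over all $n$ output indices, Frederickson--Johnson contributes $O(n^2 \lg d / d)$ comparisons, for a total of $O(nd + n\lg n + n^2 \lg d / d)$.

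Balancing $nd$ against $n^2 \lg d / d$ yields $d = \Theta(\sqrt{n \lg n})$, giving the claimed $O(n \sqrt{n \lg n})$ bound. The main subtlety to verify is the accounting of comparisons: the within-block sorted orders must truly be free from the Fredman sort of $D$ (which they are, by the equivalence above), while the cross-block comparisons issued by Frederickson--Johnson are legitimate linear queries and are already counted in its stated comparison bound. No further bookkeeping is needed, and the theorem follows.
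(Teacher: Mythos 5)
Your proof is correct and follows essentially the same approach as the paper: sort the set $D$ of nearby differences via Fredman's theorem, use that sorted order to obtain for free the sorted order of each length-$d$ block of $\{x_i - y_{k-i}\}$, then apply Frederickson--Johnson to get each median in $O((n/d)\lg d)$ comparisons, and balance to $d = \Theta(\sqrt{n\lg n})$. You add one useful clarification that the paper leaves implicit --- namely that the cross-block comparisons Frederickson--Johnson performs cannot be answered from the sorted $D$, but are nonetheless legal linear queries and are already accounted for in its $O((n/d)\lg d)$ bound --- which is the right way to justify the charging.
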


\begin{proof}
  As in the proof of Theorem \ref{l_infty nonuniform},
  we sort the set $D = \{x_i - x_j, y_i - y_j : |i-j| \leq d\}$
  of pairwise differences between nearby $x_i$'s and nearby $y_i$'s,
  where $d \leq n$ is a value to be determined later.
  By Theorem \ref{Fredman lemma}, this step requires
  $O(n d + n \lg n)$ comparisons between differences.
  These comparisons enable us to compare $x_i - y_{k-i}$
  versus $x_j - y_{k-j}$ for free, provided $|i - j| \leq d$, because
  $x_i - y_{k-i} < x_j - y_{k-j}$ precisely if
  $x_i - x_j < y_{k-i} - y_{k-j}$.
  In particular, we can sort each list
  $$L_k(\lambda) = \Big\langle x_i - y_{k-i} ~\Big|~
                     i = \lambda, \lambda+1, \dots, \min\{\lambda+d,n\}-1
                   \Big\rangle$$
  for free.
  By Theorem~\ref{median}, we can compute the median of
  $L_k(0) \cup L_k(d) \cup L_k(2 d) \cup \cdots \cup L_k(\lceil k/d \rceil d)$,
  i.e., $\median_{i=0}^k (x_i - y_{k-i})$,
  in $O((k/d) \lg d) = O((n/d) \lg d)$ comparisons.
  Also, in the same asymptotic number of comparisons, we can binary search to
  find where the median fits in each of the $L_k(\lambda)$ lists, and therefore
  which differences are smaller and which differences are larger than
  the median.
  This median is the $k$th entry of $\vec x \medianconvolve \vec y$.
  Therefore, we can compute all $n$ entries of $\vec x \medianconvolve \vec y$
  in $O(n d + n \lg n + (n^2/d) \lg d)$ comparisons.
  This asymptotic running time is minimized
  when $n d = \Theta((n^2/d) \lg d)$,
  i.e., when $d^2/\lg d = \Theta(n)$.
  Substituting $d = \sqrt{n \lg n}$, we obtain a running time of
  $O(n \sqrt {n \lg n})$ in the nonuniform linear decision tree model.
\end{proof}

Combining Theorems~\ref{l_1 reduction} and \ref{median convolution nonuniform},
we obtain the following result:

\begin{corollary} \label{l_1 nonuniform}
  The $\ell_1$ necklace alignment problem can be solved in
  $O(n \sqrt {n \lg n})$ time in the nonuniform linear decision tree model.
\end{corollary}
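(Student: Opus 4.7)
The plan is a direct composition of the two preceding results. First I would invoke Theorem~\ref{l_1 reduction} to reduce the $\ell_1$ necklace alignment problem to a single $(\median,-)$ convolution. The reduction itself takes only $O(n)$ time (and $O(n)$ linear comparisons at the end to minimize over the shift~$s$), so it is dominated by the cost of the convolution and contributes nothing to the final bound beyond confirming that the entire procedure fits in the nonuniform linear decision tree model.

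Next I would apply Theorem~\ref{median convolution nonuniform} to the padded vectors $\vec{x}'$ and $\vec{y}'$ constructed in Theorem~\ref{l_1 reduction}. These vectors have length $\Theta(n)$, so the resulting cost is $O(n\sqrt{n\lg n})$ in the nonuniform linear decision tree model. The only mild subtlety is that $\vec{x}'$ contains $\pm\infty$ sentinel entries; in the decision tree model these just force certain comparison outcomes to be constants, so they can be hard-coded into the tree and do not affect the asymptotics.

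Adding the $O(n)$ overhead from the reduction to the $O(n\sqrt{n\lg n})$ bound for the convolution yields the claimed $O(n\sqrt{n\lg n})$ bound, completing the proof. Since both underlying theorems have already been established, there is no substantive obstacle; the statement is essentially a corollary in the literal sense, and the only writing task is to make sure the padded input sizes are $O(n)$ and that the reduction itself is performed by a linear decision tree.
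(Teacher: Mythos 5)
Your proposal is correct and is essentially the paper's own proof: the corollary is stated as an immediate consequence of combining Theorem~\ref{l_1 reduction} (the $O(n)$ reduction to $(\median,-)$ convolution) with Theorem~\ref{median convolution nonuniform} (the $O(n\sqrt{n\lg n})$ nonuniform-decision-tree algorithm for that convolution). Your added remarks about the $\Theta(n)$-length padded vectors and the $\pm\infty$ sentinels being hard-coded into the tree are accurate but not spelled out in the paper.
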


\subsection{$(\min,-)$ Convolution in Real RAM via Geometric Dominance}

Now we turn to the analogous results for the real RAM:

\begin{theorem} \label{median convolution RAM}
  The $(\median,-)$ convolution of two vectors of length $n$ can be computed
  in $O(n^2 (\lg \lg n)^2/\lg n)$ time on a real RAM.
\end{theorem}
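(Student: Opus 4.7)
The plan is to follow the block-decomposition strategy of Theorem~\ref{min convolution RAM}, but combine the blocks using Frederickson--Johnson (Theorem~\ref{median}) rather than a brute-force maximum scan. Partition each convolution sum $\median_{i=0}^{k}(x_i-y_{k-i})$ into $\lceil(k+1)/d\rceil$ contiguous blocks of $d$ indices and write $L_{i,k}=\langle x_{i+\delta}-y_{k-i-\delta}\rangle_{\delta=0}^{d-1}$ for the block at multiple-of-$d$ offset~$i$. Once each $L_{i,k}$ is stored as a sorted list, Theorem~\ref{median} returns the $k$th convolution entry in $O((n/d)\lg d)$ comparisons, so the combining cost across all $n$ entries is $O(n^2\lg d/d)$.

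The remaining task is to precompute, in subquadratic time, the sorted permutation of every block~$L_{i,k}$. As in the nonuniform proof of Theorem~\ref{median convolution nonuniform}, a pairwise comparison inside a block reduces to the 1D inequality between $x_{i+\delta_a}-x_{i+\delta_b}$ and $y_{k-i-\delta_a}-y_{k-i-\delta_b}$. My plan is to extend the $d$-dimensional Chan encoding from the proof of Theorem~\ref{min convolution RAM}: instead of running Lemma~\ref{chan} once per anchor~$\delta$ to report only the arg-max of each block, I invoke it once for each rank $r\in\{0,\ldots,d-1\}$ to report the unique $\delta$ of rank~$r$ within each block. Each run costs $2^{O(d)}n^{1+\epsilon}$ plus its output, and across the $d$ runs the total output is $O(n^2)$ dominating pairs---equivalently $O(d\lg d)$ bits of sorted-order information per block, packed into $\Theta(\lg n)$-bit words for $O(n^2\lg d/\lg n)$ total space and time.

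Setting $d=\alpha\lg n/\lg\lg n$ for a sufficiently small constant $\alpha>0$ gives $\lg d=\Theta(\lg\lg n)$ and $2^{O(d)}=n^{o(1)}$, so the Chan preprocessing runs in $n^{1+o(1)}$ time, the packed sorted-order tables take $O(n^2\lg\lg n/\lg n)$ time and space, and the Frederickson--Johnson combining step at $O(n^2(\lg\lg n)^2/\lg n)$ dominates, matching the target bound. The main technical obstacle is the rank-$r$ variant of Lemma~\ref{chan}: for $0<r<d-1$, ``$\delta$ has rank exactly $r$ in block $(i,k-i)$'' is a counting condition rather than a pure $d$-dimensional dominance, so the anchor encoding used in Theorem~\ref{min convolution RAM} does not transfer verbatim. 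I plan to handle this either by expressing rank-$r$ as the difference of two partial-dominance predicates (``at least $r$'' minus ``at least $r{+}1$'' coordinates dominated), which still lives in $O(d)$ dimensions and preserves the $2^{O(d)}n^{1+\epsilon}$ base cost, or, as a fallback, by presorting the $O(d^2)$ auxiliary 1D sequences $\{x_{i+\delta_a}-x_{i+\delta_b}\}_i$ and $\{y_{j-\delta_a}-y_{j-\delta_b}\}_j$ in $O(d^2 n\lg n)$ time and then producing each block's permutation via bit-packed sorting in $O(d^2/\lg n)$ word-RAM time per block. Either route keeps preprocessing within the overall bound and yields the claimed $O(n^2(\lg\lg n)^2/\lg n)$ running time.
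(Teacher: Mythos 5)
Your high-level framework is correct and matches the paper's: decompose each antidiagonal into $O(n/d)$ blocks of length $d$, precompute the sorted order within each block, and then apply Frederickson--Johnson (Theorem~\ref{median}) to extract each median in $O((n/d)\lg d)$ time, for $O(n^2\lg d/d)$ total combining cost. The choice $d=\alpha\lg n/\lg\lg n$ is also the one the paper uses.

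The gap is in the precomputation of per-block sorted orders, which is exactly the step you flag as the ``main technical obstacle,'' and neither of your two fallbacks closes it. Your first route --- running Lemma~\ref{chan} once per rank $r$ and expressing ``rank exactly $r$'' as a difference of predicates of the form ``at least $r$ coordinates of $p_{\delta,i}$ dominate those of $q_{\delta,j}$'' --- does not reduce to what Lemma~\ref{chan} provides. That lemma reports \emph{full} dominating pairs only; a partial-dominance count is not a dominance query in any fixed dimension, and you cannot expand it into $2^{O(d)}$ full-dominance subqueries without a combinatorial blowup that destroys the output-sensitivity the argument relies on. Your second route (presort $O(d^2)$ auxiliary difference sequences and then bit-pack) does not yield $O(d^2/\lg n)$ time per block: knowing the ranks of the two operands of each sign test still requires $\Theta(1)$ work per test with no obvious way to batch across tests whose operands live at unrelated positions in the presorted arrays, so you get $\Theta(d^2)$ per block and $\Theta(n^2 d)$ overall; and even under your optimistic $O(d^2/\lg n)$ per-block estimate, the total over $n^2/d$ blocks would be $O(n^2 d/\lg n)=O(n^2/\lg\lg n)$, worse than the target.

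The idea the paper uses instead is to enumerate candidate sort orders rather than candidate ranks: for each of the $d!$ permutations $\pi$ of $\{0,\dots,d-1\}$, encode the \emph{consecutive differences under $\pi$} as $(d-1)$-dimensional points $p_{\pi,i}=(x_{i+\pi(0)}-x_{i+\pi(1)},\dots,x_{i+\pi(d-2)}-x_{i+\pi(d-1)})$ and analogously $q_{\pi,j}$. Then $p_{\pi,i}$ dominates $q_{\pi,j}$ if and only if $\pi$ is the sorting permutation of the block $L_{i+j}(i)$, so one call to Lemma~\ref{chan} per $\pi$ reveals exactly which blocks are sorted by $\pi$. Since each block has a unique sorting permutation, the total number of dominating pairs across all $d!$ calls is still $O(n^2/d)$, and the base cost $d!\,2^{O(d)}n^{1+\epsilon}$ is subquadratic precisely because $d=\alpha\lg n/\lg\lg n$ makes $d!=n^{O(\alpha)}$. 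This sidesteps the rank-$r$ difficulty entirely. You should replace your precomputation step with this permutation-indexed dominance construction; the rest of your argument then goes through.
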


\begin{proof}
  Let $\vec x$ and $\vec y$ denote the two vectors of length~$n$,
  and let $\vec x \medianconvolve \vec y$ denote their $(\median,-)$
  convolution.
  For each permutation $\pi$ on the set $\{0, 1, \dots, d-1\}$,
  for each $i \in \{0, d, 2 d, \dots, \lfloor n/d \rfloor d\}$,
  and for each $j \in \{0, 1, \dots, n-1\}$,
  we define the $(d-1)$-dimensional points
  $$\arraycolsep=0.5\arraycolsep
  \begin{array}{rcllll}
    p_{\pi,i} &=& (x_{i+\pi(0)} - x_{i+\pi(1)}, & x_{i+\pi(1)} - x_{i+\pi(2)},
       & \dots, &  x_{i+\pi(d-2)} - x_{i+\pi(d-1)}), \\
    q_{\pi,j} &=& (y_{j-\pi(0)} - y_{j-\pi(1)}, & y_{j-\pi(1)} - y_{j-\pi(2)},
       & \dots, &  y_{j-\pi(d-2)} - y_{j-\pi(d-1)}),
  \end{array}$$
  (To handle boundary cases, define $x_i=\infty$ and $y_j=-\infty$
  for indices $i,j$ outside $[0,n-1]$.)
  For each permutation $\pi$, we apply Lemma~\ref{chan} to the set of red
  points $\{p_{\pi,i} : i = 0, d, 2 d, \dots, \lfloor n/d \rfloor d\}$ and
  the set of blue points $\{q_{\pi,j} : j = 0, 1, \dots, n-1\}$,
  to obtain all dominating pairs $(p_{\pi,i},q_{\pi,j})$.

  Point $p_{\pi,i}$ dominates $q_{\pi,j}$ precisely if
  $x_{i+\pi(\delta)} - x_{i+\pi(\delta+1)} \geq
   y_{j-\pi(\delta)} - y_{j-\pi(\delta+1)}$
  for all $\delta \in \{0, 1, \dots, d-2\}$
  (ignoring the indices outside $[0, n-1]$).
  By re-arranging terms, this condition is equivalent to
  $x_{i+\pi(\delta)} - y_{j-\pi(\delta)} \geq
   x_{i+\pi(\delta+1)} - y_{j-\pi(\delta+1)}$
  for all $\delta \in \{0, 1, \dots, d-2\}$,
  i.e., $\pi$ is a sorting permutation of
  $\langle x_i - y_j, x_{i+1} - y_{j-1}, \dots,
           x_{i+d-1} - y_{j-d+1} \rangle$.
  If we substitute $j = k-i$, we obtain that $(p_{\pi,i}, q_{\pi,k-i})$ is a
  dominating pair precisely if $\pi$ is a sorting permutation of the list
  $L_k(i) = \langle x_i - y_{k-i}, x_{i+1} - y_{k-i+1}, \dots,
           x_{\min\{i+d,n\}-1} - y_{\min\{k-i+d,n\}-1} \rangle$.
  Thus, the set of dominating pairs gives us the sorted order of
  $L_k(i)$ for each $i$ divisible by $d$ and for each~$k$.
  Also, there can be at most $O(n^2/d)$ total dominating pairs
  $(p_{\pi,i}, q_{\pi,j})$ over all $i,j,\pi$,
  because there are $O(n/d)$ choices for $i$~and
  $O(n)$ choices for~$j$,
  and if $(p_{\pi,i},q_{\pi,j})$ is a dominating pair,
  then $(p_{\pi',i},q_{\pi',j})$ cannot be a dominating pair
  for any permutation $\pi' \neq \pi$.
  (Here we assume that the sorted order is unique, which can be arranged
   by standard perturbation techniques or by breaking ties consistently
   \cite{Chan-2005-apsp}.)
  Hence, the running time of the $d!$ executions of Lemma~\ref{chan} is
  $d! \, 2^{O(d)} n^{1+\epsilon} + O(n^2/d)$ time,
  which is $O(n^2 \lg \lg n/\lg n)$
  if we choose $d = \alpha \lg n / \lg \lg n$
  for a sufficiently small constant $\alpha > 0$.
  By Theorem~\ref{median}, we can compute the median of
  $L_k(0) \cup L_k(d) \cup L_k(2 d) \cup \cdots \cup L_k(\lceil k/d \rceil d)$,
  i.e., $\median_{i=0}^k (x_i - y_{k-i})$,
  in $O((k/d) \lg d) = O((n/d) \lg d)$ comparisons.
  Also, in the same asymptotic number of comparisons, we can binary search to
  find where the median fits in each of the $L_k(\lambda)$ lists, and therefore
  which differences are smaller and which differences are larger than
  the median.
  This median is the $k$th entry of $\vec x \medianconvolve \vec y$.
  Therefore all $n$ entries can be computed in
  $O(n^2 (\lg d)/d) = O(n^2 (\lg \lg n)^2/\lg n)$ time
  on a real RAM.
\end{proof}

Combining Theorems~\ref{l_1 reduction} and \ref{median convolution RAM},
we obtain the following result:

\begin{corollary} \label{l_1 RAM}
  The $\ell_1$ necklace alignment problem can be solved in\\
  $O(n^2 (\lg \lg n)^2/\lg n)$ time on a real RAM.
\end{corollary}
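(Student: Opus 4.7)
The plan is to simply chain together the two preceding results. By Theorem~\ref{l_1 reduction}, an instance of $\ell_1$ necklace alignment on vectors of length $n$ can be transformed, in $O(n)$ time, into a single instance of $(\median,-)$ convolution on vectors of length $O(n)$ (after padding with alternating $\pm\infty$ sentinels and doubling-and-reversing $\vec y$ exactly as in the $\ell_\infty$ reduction), together with $O(n)$ additional comparisons to brute-force over the shift $s$. Thus the whole problem is handled once we have a sufficiently fast $(\median,-)$ convolution.

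Next I would feed this convolution instance into the algorithm from Theorem~\ref{median convolution RAM}, which runs in $O(n^2 (\lg \lg n)^2/\lg n)$ time on a real RAM. Summing, the total running time is $O(n) + O(n^2 (\lg \lg n)^2/\lg n)$, and the convolution term dominates; this gives the claimed bound.

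There is no real obstacle here, since the hard work is already contained in Theorems~\ref{l_1 reduction} and~\ref{median convolution RAM}; the only thing to double-check is that the vectors $\vec x', \vec y'$ fed to the convolution have length $O(n)$ (they have length $4n$ in the construction above), so the $O(n^2 (\lg \lg n)^2/\lg n)$ bound on the convolution subroutine still applies after adjusting constants. Once that is verified, the corollary follows immediately.
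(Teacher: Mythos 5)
Your proposal is correct and matches the paper's approach exactly: the corollary is obtained by combining Theorem~\ref{l_1 reduction} (the $O(n)$-time reduction from $\ell_1$ necklace alignment to $(\median,-)$ convolution) with Theorem~\ref{median convolution RAM} (the $O(n^2(\lg\lg n)^2/\lg n)$-time algorithm for $(\median,-)$ convolution). Your extra remark that the constructed vectors have length $O(n)$ (namely $4n$) is a correct and harmless sanity check that the paper leaves implicit.
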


As before, this approach likely cannot be improved beyond $O(n^2 / \lg n)$,
because such an improvement would require an improvement to Lemma~\ref{chan},
which would in turn improve the fastest known algorithm for all-pairs shortest
paths in dense graphs \cite{Chan-2010-apsp}.

In contrast to $(\median,+)$ convolution, $(\mean,+)$ convolution
is trivial to compute in linear time by inverting the two summations.

\section{Conclusion}
\label{Conclusion}

The convolution problems we consider here have connections to many classic
problems, and it would be interesting to explore whether the structural
information extracted by our algorithms could be used to devise faster
algorithms for these classic problems.
For example, does the antidiagonal information of the $X+Y$ matrix
lead to a $o(n^2 \lg n)$-time algorithm for sorting $X+Y$?
We believe that any further improvements to our convolution algorithms
would require progress and/or have interesting implications on
all-pairs shortest paths \cite{Chan-2005-apsp}.

Our $(\min,-)$-convolution algorithms give subquadratic algorithms
for \emph{polyhedral 3SUM}:
given three lists,
$A = \langle a_0, a_1, \dots, a_{n-1} \rangle$,
$B = \langle b_0, b_1, \dots, b_{n-1} \rangle$,
and
$C = \langle c_0, c_1, \dots, c_{2n-2} \rangle$,
such that $a_i + b_j \leq c_{i+j}$ for all $0 \leq i, j < n$,
decide whether $a_i + b_j = c_{i+j}$ for any $0 \leq i, j < n$.
This problem is a special case of 3SUM,
and this special case has an $\Omega(n^2)$ lower bound in the
3-linear decision tree model \cite{Erickson-1999-satisfiability}.
Our results solve polyhedral 3SUM in $O(n^2 / \lg n)$ time
in the 4-linear decision tree model, and in $O(n \sqrt n)$ time
in the nonuniform 4-linear decision tree model,
solving an open problem of Erickson \cite{Demaine-O'Rourke-2005-open}.
Can these algorithms be extended to solve 3SUM in subquadratic time
in the (nonuniform) decision tree model?

\section*{Acknowledgments}

This work was initiated at the 20th Bellairs Winter Workshop on
Computational Geometry held January 28--February 4, 2005.
We thank the other participants of that workshop---Greg Aloupis,
Justin Colannino, Mirela Damian-Iordache, Vida Dujmovi\'c,
Francisco Gomez-Martin, Danny Krizanc, Erin \allowbreak McLeish, Henk Meijer, Patrick Morin,
Mark Overmars, Suneeta Ramaswami, David Rappaport, Diane Souvaine,
Ileana Streinu, David Wood, Godfried Toussaint, Remco Veltkamp, and
Sue Whitesides---for helpful discussions
and contributing to a fun and creative atmosphere.
We particularly thank the organizer, Godfried Toussaint,
for posing the problem to us.
The last author would also like to thank Luc Devroye for pointing out 
the easy generalization of the $\ell_2$ necklace alignment problem 
to $\ell_p$ for any fixed even integer $p$.

\bibliography{3sum,algs,compgeom,convolution,juggling,math,matrix,rhythm,shortestpaths,strings,sorting}
\bibliographystyle{spmpsci}

\end{document}